\documentclass[11pt]{article}
\usepackage{authblk}
\usepackage{amssymb,latexsym,amsmath}
\usepackage{amsthm}

\theoremstyle{definition}
\newtheorem{theorem}{Theorem}
\newtheorem{corollary}[theorem]{Corollary}
\newtheorem{proposition}[theorem]{Proposition}
\newtheorem{lemma}[theorem]{Lemma}
\newtheorem{definition}[theorem]{Definition}
\newtheorem{example}[theorem]{Example}
\newtheorem{notation}[theorem]{Notation}
\newtheorem{remark}[theorem]{Remark}

\usepackage{multirow}
\usepackage{graphicx}
\usepackage{tikz}
\usetikzlibrary{matrix,decorations.pathreplacing}

\usepackage{lipsum}

\newcommand\blfootnote[1]{%
  \begingroup
  \renewcommand\thefootnote{}\footnote{#1}%
  \addtocounter{footnote}{-1}%
  \endgroup
}

\newcommand{\numberset}{\mathbb}

\newcommand{\N}{\numberset{N}}

\newcommand{\F}{\numberset{F}}

\newcommand{\mC}{\mathcal{C}}

\newcommand{\mG}{\mathcal{G}}

\usepackage{hyperref}
\newcommand{\mA}{\mathcal{A}}

\usepackage[margin=3cm]{geometry}

\DeclareMathOperator{\rdef}{Rdef}
\DeclareMathOperator{\mat}{Mat}

\newcommand{\mD}{\mathcal{D}}

\newcommand{\rk}{\mathrm{rk}}

\newcommand{\defect}{\mathrm{Rdef}}

\begin{document}

\title{Rank distribution of Delsarte codes\blfootnote{Email addresses: jdelacruz@uninorte.edu.co, elisa.gorla@unine.ch, hlopez@math.cinvestav.mx, alberto.ravagnani@unine.ch.}}

\author[1]{Javier de la Cruz}
\author[2]{Elisa Gorla}
\author[3]{Hiram H. L\'opez}
\author[2,*]{Alberto Ravagnani}

\affil[1]{Universidad del Norte, Colombia}
\affil[2]{Universit\'{e} de Neuch\^{a}tel, Switzerland}
\affil[3]{Centro de Investigaci\'{o}n y de Estudios Avanzados del IPN,
M\'{e}xico}

\footnotetext[1]{Part of the work was done while J. de la Cruz was visiting the University of Zurich. The first author thanks Joachim Rosenthal for the invitation.}
\footnotetext[2]{E. Gorla and A. Ravagnani were partially supported by the Swiss National Science Foundation through grant no. 200021\_150207 and by the ESF COST Action IC1104.}
\footnotetext[3]{H. L\'opez was partially supported by CONACyT and by the Swiss Confederation through the Swiss Government Excellence Scholarship no. 2014.0432.}

\date{}

\maketitle

\begin{abstract} 
In analogy with the Singleton defect for classical codes, we propose a definition of rank defect for Delsarte rank-metric codes. We characterize codes whose rank defect and dual rank defect are both zero, and prove that the rank distribution of such codes is determined by their parameters. This extends a result by Delsarte on the rank distribution of MRD codes. In the general case of codes of positive defect, we show that the rank distribution is determined by the parameters of the code, together the number of codewords of small rank. Moreover, we prove that if the rank defect of a code and its dual are both one, and the dimension satisfies a divisibility condition, then the number of minimum-rank codewords and dual minimum-rank codewords is the same. 
Finally, we discuss how our results specialize to Gabidulin codes.
\end{abstract}

\section*{Introduction}\label{intr0}

Rank-metric codes were first introduced in coding theory by Delsarte in~\cite{del1}. 
They are sets of matrices of fixed size, endowed with the rank distance. 
Rank-metric codes are of interest within network coding, 
public-key cryptography, and distributed storage, where they stimulated a series 
of works aimed at better understanding their properties.
In this paper, we study the rank distribution of rank-metric codes. 
We always assume that the codes are linear, and often refer to them as Delsarte codes.

The study of the weight distribution of a code is a topic of current interest in coding theory, where several authors 
have studied the case of linear codes endowed with the Hamming distance. In particular, it is a classical result that 
the weight distribution of an MDS code is determined by its parameters. The so-called MRD codes are the analogue 
of MDS codes in the context of Delsarte codes. They were introduced in~\cite{del1} by Delsarte, who also showed 
that the weight distribution of an MRD code is determined by its parameters. However, MRD codes only exist if the 
size of the matrix divides the dimension of the code. More precisely, for $n\times m$ matrices with $n\leq m$, 
$m$ must divide the dimension of the code.

In this paper, we study the rank distribution of Delsarte codes. In Section~\ref{def0} we define Quasi-MRD  (or QMRD) 
codes as codes which have the largest possible minimum distance for their parameters, but are not MRD. 
We regard them as the best alternative to MRD codes, for dimensions for which MRD codes do not exist. 
While the dual of an MRD code is MRD, the dual of a QMRD code is not necessarily QMRD. 
When both $\mC$ and its dual are QMRD, we say that $\mC$ is dually QMRD. 
In Proposition~\ref{dualqmrd} we provide a characterization of dually QMRD codes in terms of the number of codewords of minimum weight. 
Moreover, we show that QMRD codes exist for all choices of the parameters, and give examples of codes which 
are QMRD, but not dually QMRD. In analogy with the Singleton defect for classical codes, 
we propose a definition of rank defect for Delsarte codes. According to our definition, a code has rank defect zero 
if and only if it is either MRD or QMRD. 

Using the MacWilliams identities, in Section~\ref{sec2} we derive formulas 
that relate the numbers $A_i(\mC)$ of codewords of $\mC$ of weight $i$ for all values of $i$, showing that a few 
of the $A_i(\mC)$'s determine the others (see Theorem~\ref{04-02-15}). In analogy with the classical case of MDS codes, 
our result implies that the rank distribution of a code which is MRD or dually QMRD is completely determined by its parameters.
Notice that this is not the case in general for codes which are QMRD but not dually QMRD (so for codes of rank defect zero), 
as we show in Example~\ref{counterexdef0}. 

In Section~\ref{sec3} we analyze a family of codes such that both the code and its dual have rank defect one. 
In Theorem~\ref{bij} we show that they have the property that the code and its dual have the same number of minimum 
rank codewords. 

Finally, in Section~\ref{secgab} we consider Gabidulin codes and discuss how our
results specialize to this case.

\section{Preliminaries} \label{prel}

In this section we briefly recall the main definitions and results of the theory of
Delsarte rank-metric codes.

\begin{notation}
Throughout the paper, $q$ denotes a fixed prime power, and $\F_q$ the finite field with
$q$ elements. We also work with positive integers $1 \le n \le m$, and denote
by $\mbox{Mat}$ the vector space of $n \times m$ matrices with entries in $\F_q$. 
Given a positive integer $a$, we denote by $[a]$ the set $\{1,...,a\}$ and by $I_a$ the identity matrix of size $a$. 
The trace of a square matrix $M$ is denoted and defined by $\mbox{Tr}(M)=\sum_{i=1}^n M_{ii}$.
The rank of a matrix $M$ is denoted by $\mbox{rk}(M)$. 
All dimensions are computed over $\F_q$, unless otherwise specified.
\end{notation}

\begin{definition}
A (\textbf{Delsarte rank-metric}) \textbf{code} is an $\F_q$-linear subspace $\mC\subseteq\mbox{Mat}$. 
The \textbf{minimum distance} of a code $\mC \neq \{ 0\}$ is 
$d(\mC):=\min \{ \rk(M): M \in \mC, \ M\neq 0 \}$.  
The \textbf{rank distribution} of $\mC$ is the collection ${(A_i(\mC))}_{i \in \N}$, 
where $A_i(\mC):=|\{ M\in \mC: \rk(M)=i\}|$ for $i \in \N$.
The \textbf{dual} of a Delsarte code $\mC \subseteq \mbox{Mat}$ is the code 
$$\mC^\perp:=\{ M \in \mbox{Mat} : \mbox{Tr}(MN^t)=0 \mbox{
for all } N \in \mC\} \subseteq \mbox{Mat}.$$
A code $\mC$ is \textbf{trivial} if $\mC=\{ 0 \}$ or $\mC=\mbox{Mat}$.
\end{definition}

Throughout the paper, $\mC$ denotes a non-trivial Delsarte code with 
minimum distance $d$ and dimension $t$. We let $d^\perp$ be the minimum distance of its dual $\mC^\perp$.

\begin{remark} \label{props}
 The trace-product of matrices $(M,N) \mapsto \mbox{Tr}(MN^t)$ is a symmetric
 and non-degenerate bilinear form. In particular, the dual of a Delsarte code
$\mC$  is a Delsarte code of dimension $\dim(\mC^\perp)=mn-t$.
Moreover, given Delsarte codes $\mC,\mD \subseteq \mbox{Mat}$ we have
$(\mC + \mD)^\perp=\mC^\perp \cap \mD^\perp$ and $(\mC \cap
\mD)^\perp=\mC^\perp + \mD^\perp$. Finally, $(\mC^\perp)^\perp=\mC$ for any
Delsarte code $\mC$.
\end{remark}

The following result by Delsarte is well known. 
It is the analogue of the Singleton bound in the context of rank-metric codes.

\begin{theorem}[\cite{del1}, Theorem 5.4] \label{sbound}
Let $\mC$ be a Delsarte code with minimum distance $d$ and dimension $t$.
We have $t\le m(n-d+1)$.
\end{theorem}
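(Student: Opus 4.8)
The plan is to mimic the classical proof of the Singleton bound via a projection (puncturing) argument, exploiting the fact that deleting rows can only decrease the rank. First I would introduce the space $\Mat'$ of $(n-d+1)\times m$ matrices over $\F_q$, and define the $\F_q$-linear map $\pi\colon\Mat\to\Mat'$ that sends a matrix $M$ to the submatrix obtained by deleting its last $d-1$ rows. This map is clearly $\F_q$-linear, and a routine count gives $\dim_{\F_q}\Mat'=m(n-d+1)$.

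The key step is to show that the restriction $\pi|_{\mC}$ is injective. Suppose $M\in\mC$ lies in the kernel of $\pi$, so that the first $n-d+1$ rows of $M$ all vanish. Then $M$ has at most $d-1$ nonzero rows, whence $\rk(M)\le d-1<d$. Since $d$ is the minimum distance of $\mC$, every nonzero codeword has rank at least $d$, and so this forces $M=0$. Therefore $\ker(\pi|_{\mC})=\{0\}$ and $\pi|_{\mC}$ is injective.

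Finally, injectivity yields the dimension inequality
\[
t=\dim_{\F_q}\mC\le\dim_{\F_q}\Mat'=m(n-d+1),
\]
which is the claimed bound.

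I expect the only delicate point to be the elementary observation that a matrix with at most $d-1$ nonzero rows has rank at most $d-1$; everything else is a straightforward dimension count. It is worth noting that deleting rows rather than columns is the efficient choice here precisely because $n\le m$: discarding $d-1$ of the $n$ rows costs only $m(d-1)$ dimensions, and it is a matrix's rows (there being only $n\le m$ of them) whose vanishing most directly controls the rank, yielding exactly the stated bound $t\le m(n-d+1)$.
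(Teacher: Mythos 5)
Your proof is correct and complete. Note that the paper does not prove this statement at all: it is quoted as Theorem 5.4 of Delsarte's paper \cite{del1}, so there is no internal proof to compare against. Your argument is the standard puncturing one: the projection $\pi\colon \mbox{Mat} \to \mbox{Mat}'$ deleting $d-1$ rows is injective on $\mC$, since any $M \in \ker(\pi|_{\mC})$ has at most $d-1$ nonzero rows, hence $\rk(M) \le d-1 < d$, forcing $M=0$; the bound $t \le m(n-d+1)$ then follows by comparing dimensions. All steps, including the elementary fact that the rank is bounded by the number of nonzero rows, are sound, and your closing remark is also accurate: deleting rows rather than columns is what yields the sharper bound $m(n-d+1) \le n(m-d+1)$ under the standing assumption $n \le m$.
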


\begin{definition}\label{defmrd}
A Delsarte code $\mC$ is \textbf{MRD} if $t= m(n-d+1)$.
\end{definition}

\begin{theorem}[\cite{del1}, Theorem 5.6] \label{dualmrd}
A Delsarte code $\mC$ is MRD if and only if the dual code $\mC^\perp$ is MRD.
\end{theorem}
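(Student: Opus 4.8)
The plan is to prove the nontrivial implication ``$\mC$ MRD $\Rightarrow$ $\mC^\perp$ MRD'', the converse following by applying it to $\mC^\perp$ and using $(\mC^\perp)^\perp=\mC$ (Remark~\ref{props}). Throughout I set $r:=n-d+1$, so that $\mC$ being MRD means $t=mr$, while by definition of $d$ every nonzero $M\in\mC$ has $\rk(M)\ge d=n-r+1$. Applying the Singleton bound (Theorem~\ref{sbound}) to $\mC^\perp$, whose dimension is $mn-t=m(n-r)$, yields $d^\perp\le r+1$. Hence it suffices to show that $\mC^\perp$ contains no nonzero matrix of rank at most $r$: this forces $d^\perp=r+1$ and $\dim\mC^\perp=m(n-d^\perp+1)$, i.e. $\mC^\perp$ is MRD.

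The main tool will be a family of optimal anticodes. For a subspace $U\subseteq\F_q^n$ let $\mA_U:=\{N\in\mbox{Mat}:\mbox{im}(N)\subseteq U\}$ be the space of matrices whose column space lies in $U$. Each of the $m$ columns of such an $N$ ranges freely over $U$, so $\dim\mA_U=m\dim U$, and every $N\in\mA_U$ clearly satisfies $\rk(N)\le\dim U$. The first step is to identify the dual: a direct computation with the trace-product shows that $\mbox{Tr}(MN^t)=0$ for all $M\in\mA_U$ precisely when every column of $N$ is orthogonal to $U$, so that $\mA_U^\perp=\mA_{U^\perp}$, where $U^\perp\subseteq\F_q^n$ is the orthogonal complement of $U$ for the standard dot product. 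In particular $\dim U^\perp=n-\dim U$, consistently with $\dim\mA_U+\dim\mA_U^\perp=mn$.

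Now suppose, for contradiction, that $\mC^\perp$ contains a nonzero $M$ with $\rk(M)=\rho\le r$. The key step --- and the point where the naive choice $U=\mbox{im}(M)$ is not yet sharp --- is to enlarge the column space: since $\rho\le r\le n$, I choose a subspace $U\supseteq\mbox{im}(M)$ with $\dim U=r$. Then $M\in\mC^\perp\cap\mA_U$, so this intersection is nonzero; dualizing via Remark~\ref{props} and the previous paragraph gives $(\mC^\perp\cap\mA_U)^\perp=\mC+\mA_{U^\perp}$, a proper subspace of $\mbox{Mat}$, whence $\dim(\mC+\mA_{U^\perp})\le mn-1$. Since $\dim\mC+\dim\mA_{U^\perp}=mr+m(n-r)=mn$, the Grassmann formula forces $\dim(\mC\cap\mA_{U^\perp})\ge 1$. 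Thus $\mC$ contains a nonzero codeword with column space in $U^\perp$, hence of rank at most $\dim U^\perp=n-r=d-1<d$, contradicting the minimum distance of $\mC$.

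I expect the main obstacle to be exactly the sharpness issue flagged above: taking $U=\mbox{im}(M)$ only produces a codeword of $\mC$ of rank $\le n-\rho$, which contradicts $d$ only in the boundary case $\rho=r$. Passing to an $r$-dimensional $U$ is what makes the count tight, and it works because $\mC$ and $\mA_{U^\perp}$ have exactly complementary dimensions precisely when $\mC$ is MRD. Everything else is routine --- the anticode dimension $m\dim U$, its dual $\mA_{U^\perp}$, and the modular/duality identities of Remark~\ref{props}. An alternative, less self-contained route would invoke the MacWilliams identities for the rank metric together with the explicit rank distribution of MRD codes, but the anticode argument avoids any computation with the distribution and relies only on Theorem~\ref{sbound}.
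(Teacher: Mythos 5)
Your proof is correct, and it takes a genuinely different route from the paper, which in fact offers no proof at all: Theorem~\ref{dualmrd} is imported from Delsarte \cite{del1}, where the duality statement is proved together with the computation of the rank distribution of MRD codes via MacWilliams-type identities for the inner distribution --- essentially the alternative you mention and set aside in your closing paragraph. Your anticode argument is sound: the reduction to showing that $\mC^\perp$ contains no nonzero matrix of rank at most $r=n-d+1$ is valid, enlarging $\mbox{im}(M)$ to a subspace $U$ of dimension exactly $r$ is the correct sharpening, and the Grassmann-formula count is tight precisely because $\dim\mC+\dim\mA_{U^\perp}=mr+m(n-r)=mn$ when $\mC$ is MRD. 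It is worth noting that the ingredients you build by hand already appear in this paper, though later and for a different purpose: your $\mA_U$ is the paper's $\mbox{Mat}(U)$, your duality $\mA_U^\perp=\mA_{U^\perp}$ is Lemma~\ref{dimdu} (quoted from \cite{Alberto}), and your contradiction step is exactly Proposition~\ref{formula} applied to the subspace $U^\perp$, which for $t=m(n-d+1)$ and $\dim(U)=r$ gives $\dim(\mC(U^\perp))=\dim(\mC^\perp(U))\ge 1$; the paper deploys this machinery in Section~\ref{sec3} to prove Theorem~\ref{bij}, not Theorem~\ref{dualmrd}. Comparing the two approaches: Delsarte's distribution-based proof yields as a by-product the full rank distribution of MRD codes (recovered in this paper as Corollary~\ref{compl_det}), whereas your argument is more elementary --- it needs only Theorem~\ref{sbound}, Remark~\ref{props}, and the anticode duality, with no weight-enumerator computations --- and it isolates the structural reason behind the theorem: an MRD code and the anticodes $\mA_{U^\perp}$ have exactly complementary dimensions, so a low-rank dual codeword would force a codeword of $\mC$ below its minimum distance.
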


We briefly recall the definition and the main algebraic properties of $q$-ary Gaussian
coefficients, for the convenience of the reader. A standard reference is  \cite{andr}.

\begin{definition} Let $q$ be a prime power, and let $a$ and $b$ be integers. 
  The $q$-ary Gaussian coefficient of $a$ and $b$ is defined by
  $${{a \brack b}} :=\left\{ \begin{array}{cl} 0 & \mbox{ if $a < 0, \; b <
 0$, or $b >a$,} \\ 1 & \mbox{ if $b=0$ and $a \geq 0$,} \\              
 \frac{(q^a-1)(q^{a-1}-1) \cdots (q^{a-b+1}-1)}{(q^b-1)(q^{b-1}-1) \cdots (q-1)} &
 \mbox{ otherwise.} \end{array} \right.$$
  \end{definition}

\begin{lemma} \label{progauss}
 Let $a,b,r$ be integers. The following hold:
  \begin{enumerate}
  \item ${a \brack 0} = {a \brack a}=1$ for $a \ge 0$, \label{bin1}
  \item ${a \brack b} = {a \brack a-b}$, \label{bin2}
  \item ${a \brack b} {b \brack r}={a \brack r} {a-r \brack a-b}$, \label{bin3}
  \item ${a \brack b}= q^b {a-1 \brack b}+ {a-1\brack b-1}$ for $a,b \ge 1$,
\label{bin4}
  \item $\sum_{i=0}^{a} (-1)^iq^{{i\choose 2}}{a \brack i}=0$ for $a \ge 1$.
\label{bin5}
 \end{enumerate}
\end{lemma}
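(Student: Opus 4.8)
The plan is to derive everything from the single closed-form expression for the Gaussian coefficient. Set $\phi(a):=\prod_{i=1}^{a}(q^i-1)$ for $a\ge 0$, with the empty product $\phi(0)=1$. For $0\le b\le a$ the defining formula rearranges into the symmetric shape
\[
{a \brack b}=\frac{\phi(a)}{\phi(b)\,\phi(a-b)},
\]
because the numerator $(q^a-1)\cdots(q^{a-b+1}-1)$ is exactly $\phi(a)/\phi(a-b)$ and the denominator is $\phi(b)$. With this identity, parts \eqref{bin1}, \eqref{bin2} and \eqref{bin3} are immediate. Part \eqref{bin1} follows by putting $b=0$ or $b=a$ and using $\phi(0)=1$; part \eqref{bin2} is the evident invariance of the right-hand side under $b\leftrightarrow a-b$; and part \eqref{bin3} follows by substituting the closed form on both sides, each of which collapses to $\phi(a)/\bigl(\phi(r)\,\phi(b-r)\,\phi(a-b)\bigr)$. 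Throughout I would dispose of the degenerate ranges (negative arguments, or $b>a$) separately, where by definition one or both sides vanish and the asserted equalities are trivial.

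For the $q$-Pascal recursion \eqref{bin4} I would simply compute. Writing $q^b{a-1\brack b}$ and ${a-1\brack b-1}$ over the common denominator $\phi(b)\,\phi(a-b)$ and using $\phi(a-b)=(q^{a-b}-1)\phi(a-b-1)$ together with $\phi(b)=(q^b-1)\phi(b-1)$, the two numerators add up to
\[
\phi(a-1)\bigl[q^b(q^{a-b}-1)+(q^b-1)\bigr]=\phi(a-1)(q^a-1)=\phi(a),
\]
which is precisely the numerator of ${a\brack b}$. The edge values $b=0$ and $b=a$ should be verified against the convention that out-of-range coefficients are $0$.

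The only part carrying a genuine idea is \eqref{bin5}, and here I would exploit \eqref{bin4}. First I would note that, for $a\ge 1$, the recursion ${a\brack i}=q^i{a-1\brack i}+{a-1\brack i-1}$ actually holds for \emph{every} integer $i$ once out-of-range coefficients are read as $0$. Substituting this into $S_a:=\sum_{i=0}^{a}(-1)^iq^{\binom{i}{2}}{a\brack i}$ breaks it into two sums; reindexing the second by $j=i-1$ and using the exponent identity $\binom{i}{2}+i=\binom{i+1}{2}$ makes the two sums term-by-term identical, so they cancel and $S_a=0$ falls out directly, with no induction required. I expect the main obstacle to be purely organisational: ensuring that the reindexing and the vanishing of the out-of-range terms are handled cleanly enough that the cancellation is exact. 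As an independent check, \eqref{bin5} is the value at $x=-1$ of the finite $q$-binomial theorem $\prod_{j=0}^{a-1}(1+q^jx)=\sum_{i=0}^{a}q^{\binom{i}{2}}{a\brack i}x^i$, whose factor $(1+q^0x)=(1+x)$ vanishes there.
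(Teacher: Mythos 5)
Your proposal is correct. Note, however, that the paper itself gives no proof of this lemma at all: it is stated as a collection of standard facts about Gaussian coefficients, with Andrews' \emph{The Theory of Partitions} cited as the reference, so there is no argument in the paper to compare against. Your write-up is therefore a self-contained verification of what the authors take for granted, and it follows the standard route: the factorial-type closed form ${a \brack b}=\phi(a)/(\phi(b)\phi(a-b))$ with $\phi(a)=\prod_{i=1}^a(q^i-1)$ disposes of parts (1)--(3) by inspection, a common-denominator computation gives the $q$-Pascal rule (4), and your derivation of (5) by substituting the recursion, shifting the index, and using $\binom{i}{2}+i=\binom{i+1}{2}$ is exactly right --- the two resulting sums cancel term by term because the boundary terms (${a-1\brack a}$ and ${a-1\brack -1}$) vanish, so no induction is needed. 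Your closing remark that (5) is the evaluation of the $q$-binomial theorem $\prod_{j=0}^{a-1}(1+q^jx)=\sum_i q^{\binom{i}{2}}{a\brack i}x^i$ at $x=-1$ is the conceptual reason the identity holds and is the proof one finds in Andrews. The only place where you should be a bit more careful than a wave of the hand is the degenerate ranges in (2) and (3): for instance in (3) with $r>b$ one must observe that the right-hand side also vanishes, either because $a-b>a-r\ge 0$ or because $a-r<0$; these case checks are routine but they are genuinely several cases, not one.
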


In \cite{del1} Delsarte proved that the rank distribution of a code and of its dual satisfy MacWilliams identities. 
In particular, they determine each other. In this paper we use the following form of the MacWilliams identities 
for Delsarte codes, which was given in \cite[Corollary 1 and Proposition 3]{Gaduleau}. 
An elementary combinatorial proof can be found in \cite{Alberto}.

\begin{theorem}\label{MacWilliams identities} 
Let $\mC$ be a  Delsarte code. 
For any integer $0 \leq r \leq
n$ we have $$\sum_{i=0}^{n-r}A_i(\mC){n-i \brack
r}=\frac{|\mC|}{q^{mr}}\sum_{j=0}^{r}A_j(\mC^\perp){n-j
\brack  r-j }.$$
In particular, if $\mC$ is non-trivial then
$$\sum\limits_{i=d}^{n-r} 
{n-i\brack r}A_i(\mC) = \left( \frac{|\mC|}{q^{mr}}-1 \right) {n\brack r}$$ for $r=0, \ldots, d^\perp-1$,
where $d$ and $d^\perp$ denote  the minimum
distance of $\mC$ and $\mC^\perp$, respectively. 
\end{theorem}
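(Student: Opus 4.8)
The plan is to establish the first (general) identity by a double-counting argument over pairs consisting of a codeword and a subspace, and then to obtain the ``in particular'' statement as an immediate specialization.

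First I would reinterpret the left-hand side. Viewing a matrix $M \in \mbox{Mat}$ as acting on the right of row vectors, its left kernel $\{u \in \Fqn : uM = 0\}$ has dimension $n - \rk(M)$, so ${n-\rk(M) \brack r}$ is exactly the number of $r$-dimensional subspaces $U \subseteq \Fqn$ contained in it. Hence $\sum_{i=0}^{n-r} A_i(\mC){n-i \brack r} = \sum_{M \in \mC}{n-\rk(M)\brack r}$ counts the pairs $(M,U)$ with $M \in \mC$, $\dim U = r$, and $uM = 0$ for all $u \in U$. Summing over $U$ first and setting $L_U := \{M \in \mbox{Mat} : uM = 0 \text{ for all } u \in U\}$, this equals $\sum_{\dim U = r} |\mC \cap L_U|$. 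Choosing a basis of $U$ exhibits $L_U$ as the kernel of a surjection onto the $r \times m$ matrices, so $L_U$ is a Delsarte code of dimension $(n-r)m$.

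The second step is to pass to the dual. Combining $\dim(\mC^\perp) = nm - t$ with the modular identities of Remark~\ref{props} gives $\dim(\mC \cap L_U) = t - rm + \dim(\mC^\perp \cap L_U^\perp)$, hence $|\mC \cap L_U| = (|\mC|/q^{mr})\,|\mC^\perp \cap L_U^\perp|$. A direct computation with the trace form identifies $L_U^\perp = \{N \in \mbox{Mat} : \mathrm{colsp}(N) \subseteq U\}$, a space of dimension $rm$. Summing the transferred cardinality back over $U$ and swapping the order of summation once more, I would count pairs $(N,U)$ with $N \in \mC^\perp$ and $\mathrm{colsp}(N) \subseteq U$: for a fixed $N$ of rank $j$ the number of admissible $U$ is ${n-j \brack r-j}$, so $\sum_{\dim U = r} |\mC^\perp \cap L_U^\perp| = \sum_{j=0}^{r} A_j(\mC^\perp){n-j \brack r-j}$. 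Chaining these three relations yields the first identity. For the second statement I would take $0 \le r \le d^\perp - 1$: on the right, $A_j(\mC^\perp) = 0$ for $1 \le j \le r$ and $A_0(\mC^\perp) = 1$, so only the $j = 0$ term survives and the right-hand side becomes $(|\mC|/q^{mr}){n \brack r}$; on the left, splitting off the $i = 0$ term (which contributes ${n \brack r}$) and using $A_i(\mC) = 0$ for $1 \le i \le d - 1$ lets me restrict the sum to $i \ge d$, and rearranging gives $\sum_{i=d}^{n-r}{n-i \brack r}A_i(\mC) = (|\mC|/q^{mr} - 1){n \brack r}$.

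The main obstacle is bookkeeping rather than ideas: keeping the two sides asymmetric in the right way (left kernels of codewords of $\mC$ against column spaces of codewords of $\mC^\perp$) and matching the Gaussian-coefficient indices ${n-i \brack r}$ versus ${n-j \brack r-j}$, where off-by-one errors are easy. The two facts that genuinely need care are the cardinality transfer $|\mC \cap L_U| = (|\mC|/q^{mr})|\mC^\perp \cap L_U^\perp|$ and the identification of $L_U^\perp$; both rely only on non-degeneracy of the trace product and the modular law. An equivalent route, should the combinatorics become unwieldy, is Poisson summation on the group $(\mbox{Mat},+)$: identify characters with matrices via $N \mapsto \chi(\mbox{Tr}(\,\cdot\,N^t))$ and apply the Fourier transform to rank-indicator functions, which relocates the work to computing the transform of a rank-invariant function, again a Gaussian-coefficient identity.
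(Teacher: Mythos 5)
Your proof is correct. A point of comparison worth making explicit: the paper does not actually prove Theorem~\ref{MacWilliams identities} — it imports the identity from \cite[Corollary 1 and Proposition 3]{Gaduleau} and points to \cite{Alberto} for an elementary combinatorial proof — and what you have written is essentially that combinatorial proof, reconstructed from scratch. Your space $L_U$ is the paper's $\mbox{Mat}(U^\perp)$: the condition $uM=0$ for all $u\in U$ says precisely $\mbox{colsp}(M)\subseteq U^\perp$, so your identification $L_U^\perp=\{N\in\mbox{Mat}:\mbox{colsp}(N)\subseteq U\}$ is Lemma~\ref{dimdu}.\ref{p2}, and your cardinality transfer $|\mC\cap L_U|=(|\mC|/q^{mr})\,|\mC^\perp\cap L_U^\perp|$ is exactly the modular-law computation of Proposition~\ref{formula} applied with $U^\perp$ in place of $U$ — the same argument the paper uses later, in Section~\ref{sec3}, to prove Theorem~\ref{bij}. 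Both double counts are sound (for fixed $M$ of rank $i$ there are ${n-i\brack r}$ admissible $U$; for fixed $N$ of rank $j$ there are ${n-j\brack r-j}$), and the specialization to $0\le r\le d^\perp-1$, where only $j=0$ survives on the right and only $i=0$ and $i\ge d$ on the left, is exactly right. Relative to the source the paper cites for the statement (Gadouleau--Yan, and ultimately Delsarte), which obtains the full MacWilliams transformation of the rank distribution via character sums and association-scheme machinery and then extracts binomial moments, your argument yields the moment form directly and self-containedly; that is weaker as a general statement but it is precisely the form the paper needs for Proposition~\ref{dualqmrd} and Theorem~\ref{04-02-15}.
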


\section{Rank defect, MRD, and quasi-MRD codes} \label{def0}

In this paper we study the rank distribution of codes which are MRD, or close to being MRD. 
Being MRD is the analogue of being MDS for codes endowed with the Hamming distance. 
A Hamming code is MDS if its minimum distance meets the Singleton bound. 
The natural analogue for rank-metric codes of the Singleton bound was given in Theorem~\ref{sbound}. In terms of minimum distance,  
Theorem~\ref{sbound} may be stated as follows.

\begin{corollary}\label{rdef0}
Let $\mC\subset\mat$ be a Delsarte code with minimum distance $d$ and dimension $t$.
Then $$d\leq n-\left\lceil\frac{t}{m}\right\rceil+1.$$ 
\end{corollary}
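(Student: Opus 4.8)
The plan is to derive this as an immediate reformulation of Theorem~\ref{sbound}. That theorem already supplies the bound $t \le m(n-d+1)$, so the entire task reduces to rewriting this inequality in terms of $d$ while correctly handling the fact that $t/m$ need not be an integer. First I would divide both sides of $t \le m(n-d+1)$ by the positive integer $m$, obtaining $t/m \le n-d+1$.

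The only real step is the passage to the ceiling. The key observation is that the right-hand side $n-d+1$ is an \emph{integer}, since $n$ and $d$ are positive integers. By definition, $\lceil t/m \rceil$ is the smallest integer that is greater than or equal to $t/m$; as $n-d+1$ is an integer satisfying $n-d+1 \ge t/m$, it must dominate the least such integer, giving $\lceil t/m \rceil \le n-d+1$. Rearranging yields $d \le n - \lceil t/m \rceil + 1$, as claimed.

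I do not anticipate any genuine obstacle here, since the statement is logically equivalent to Theorem~\ref{sbound}. The only point requiring care is the justification just described: one should not naively ``take ceilings of both sides'' of $t/m \le n-d+1$, but rather exploit that the larger side is already an integer, so that $\lceil t/m \rceil \le n-d+1$ follows from the minimality in the definition of the ceiling function. Conversely, one could note that the two statements are equivalent, since $\lceil t/m \rceil \le n-d+1$ holds precisely when $t \le m(n-d+1)$, confirming that Corollary~\ref{rdef0} is merely a restatement of the rank-metric Singleton bound in terms of the minimum distance.
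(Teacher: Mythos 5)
Your proof is correct and matches the paper's intent exactly: the paper states Corollary~\ref{rdef0} as a direct reformulation of Theorem~\ref{sbound} (giving no separate proof), and your argument supplies precisely the missing justification, namely that since $n-d+1$ is an integer bounding $t/m$ from above, the minimality in the definition of the ceiling gives $\lceil t/m\rceil \le n-d+1$.
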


This motivates the definition of rank defect for Delsarte codes.

\begin{definition}\label{rkdef}
The {\bf rank defect} of $\mC$ is
$$\rdef(\mC)=n-\left\lceil\frac{t}{m}\right\rceil-d+1.$$
\end{definition}

For codes endowed with the Hamming distance, many authors regard the Singleton defect 
as a measure of how far the code is from being MDS (see e.g.
\cite{Faldum-Willems}). 
The situation is slightly more complicated in the case of rank-metric codes.
In fact, if $\mC$ is MRD then $\rdef(\mC)=0$. 
However, $\rdef(\mC)$ may be zero also for codes $\mC$ which are not MRD. 
This has a simple explanation: A code $\mC$ has $\rdef(\mC)=0$ 
if and only if its minimum distance has the largest possible value, for the given $m,n$, and $t$.
However, if $m\nmid t$, then $\mC$ is not MRD. 
This observation motivates the definition of Quasi-MRD code.

\begin{definition}
A code $\mC$ of dimension $t$ is {\bf Quasi-MRD}, or {\bf QMRD}, if $m\nmid t$ and 
$\mC$ attains the bound of Corollary~\ref{rdef0}. 
Equivalently, $\mC$ is QMRD if and only if $\rdef(\mC)=0$ and $\mC$ is not MRD.
\end{definition}

Existence of MRD codes was established in~\cite{del1} and~\cite{Gabidulin} 
for all $1\leq n\leq m$ and $1\leq d\leq n$. 
Recently, constructions of MRD codes which are not equivalent to
the previous ones appeared in~\cite{sheekey} and ~\cite{Willems-DelaCruz-K-W}.
We complete the picture by showing that QMRD codes exist for all choices of the
parameters.

\begin{example}[Existence of QMRD codes]\label{trivialqmrd}
For any $1\leq n\leq m$ and $1\leq t<nm$ such that $m\nmid t$, we can
construct a QMRD Delsarte code of dimension $t$ as follows. Let
$\mathcal{D}$ be an MRD code of dimension
$\dim(\mathcal{D})=m\left\lceil\frac{t}{m}\right\rceil$. $\mathcal{D}$ has
minimum distance
$$d=n-\frac{\dim\mathcal{D}}{m}+1=n-\left\lceil\frac{t}{m}\right\rceil+1.$$
Let $\mC\subseteq \mathcal{D}$ be a subspace of dimension $t$ containing a
codeword of $\mathcal{D}$ of minimum weight. Then $\mC$ has minimum distance
$d$, hence it is QMRD with the chosen parameters.
\end{example}

While the dual of an MRD code is MRD, the dual of a QMRD code is not necessarily QMRD,
as the following example shows.

\begin{example} \label{exdualqmrd}
 Let $q=2$, $n=m=3$. Let $\mC$ be the code generated over $\F_2$ by the four matrices
$$ \begin{bmatrix}
    0 & 1 & 0 \\ 1 & 0 & 0 \\ 0 & 0 & 0
   \end{bmatrix}, \ \ \ \ 
\begin{bmatrix}
    1 & 1 & 0 \\ 0 & 1 & 0 \\ 0 & 0 & 0
   \end{bmatrix}, \ \ \ \ 
\begin{bmatrix}
    0 & 0 & 1 \\ 0 & 0 & 0 \\ 1 & 0 & 0
   \end{bmatrix}, \ \ \ \ 
\begin{bmatrix}
    0 & 0 & 0 \\ 0 & 0 & 1 \\ 0 & 1 & 0
   \end{bmatrix}.
$$
$\mC$ has dimension 4 and minimum distance 2, therefore it is QMRD. On the other
hand,
 $$\begin{bmatrix} 
    0 & 0 & 0 \\ 0 & 0 & 0 \\ 0 & 0 & 1
   \end{bmatrix} \in \mC^\perp,
$$ so the minimum distance of $\mC^\perp$ is 1. In particular, $\mC^\perp$
is not QMRD.
\end{example}

The existence of such examples motivates the definition of dually MRD code.

\begin{definition}
A code $\mC$ is {\bf dually QMRD} if both $\mC$ and $\mC^\perp$ are QMRD.
\end{definition}

One can also find examples over any ground field and for matrices of size $n\times m$, if $m\geq 2$.

\begin{example}\label{duallyQMRD}
Let $2\leq n\leq m$, $0<\rho<m$, and let $\mC$ be a code of dimension $\rho$ and 
minimum distance $d<n$. $\mC$ is not QMRD, since 
$$\rdef(\mC)=n-\left\lceil \frac{\rho}{m}\right\rceil-d+1\neq 0.$$
Its dual $\mC^\perp$ has dimension $mn-\rho$ and minimum distance
$$d^\perp\leq n-\left\lceil n-\frac{\rho}{m}\right\rceil+1=1,$$ by Corollary~\ref{rdef0}.
Hence $d^\perp=1$ and $\mC^\perp$ is QMRD. 
\end{example}
 
We can characterize dually QMRD codes in terms of their number of minimum-rank codewords.

\begin{proposition} \label{dualqmrd}
Let $\mC$ be QMRD of dimension $t$, and let $0<\rho<m$ be the reminder obtained dividing
$t$ by $m$. Then $\mC^\perp$ is QMRD if and only if $A_d(\mC)= {n\brack d} (q^\rho-1)$.
\end{proposition}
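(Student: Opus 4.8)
The plan is to reduce the QMRD condition on $\mC^\perp$ to a single statement about its minimum distance $d^\perp$, and then to read off $A_d(\mC)$ from one carefully chosen instance of the MacWilliams identities.

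First I would translate both hypothesis and conclusion into conditions on the parameters. Since $\mC$ is QMRD, $d=n-\lceil t/m\rceil+1$, and writing $t=m(\lceil t/m\rceil-1)+\rho$ with $0<\rho<m$ I would compute $\dim\mC^\perp=mn-t=m(n-\lceil t/m\rceil)+(m-\rho)$. Because $0<m-\rho<m$, this shows simultaneously that $m\nmid\dim\mC^\perp$ (so $\mC^\perp$ is never MRD) and that $\lceil\dim\mC^\perp/m\rceil=n-\lceil t/m\rceil+1=d$. Hence $\rdef(\mC^\perp)=n-d-d^\perp+1$, and $\mC^\perp$ is QMRD if and only if $\rdef(\mC^\perp)=0$, i.e. if and only if $d^\perp=n-d+1$. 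Thus the proposition is equivalent to: $A_d(\mC)={n\brack d}(q^\rho-1)$ if and only if $d^\perp=n-d+1$.

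Next I would invoke the general MacWilliams identity of Theorem~\ref{MacWilliams identities} with the single value $r=n-d$. On the left, $n-r=d$ and $A_i(\mC)=0$ for $1\le i\le d-1$, so only the terms $i=0$ and $i=d$ survive, yielding ${n\brack d}+A_d(\mC)$ (using ${n\brack n-d}={n\brack d}$ and ${n-d\brack n-d}=1$). On the right, $|\mC|/q^{mr}=q^{t-m(n-d)}=q^\rho$, and since $A_0(\mC^\perp)=1$ while $A_j(\mC^\perp)=0$ for $1\le j\le d^\perp-1$, I obtain the key identity
$$A_d(\mC)=(q^\rho-1){n\brack d}+q^\rho\sum_{j=d^\perp}^{n-d}A_j(\mC^\perp){n-j\brack n-d-j}.$$

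Finally I would analyze the tail sum, all of whose terms are non-negative. If $d^\perp=n-d+1$ the index range is empty, the sum vanishes, and $A_d(\mC)=(q^\rho-1){n\brack d}$. Conversely, if $d^\perp\le n-d$, then the $j=d^\perp$ summand is present; there $A_{d^\perp}(\mC^\perp)\ge 1$ by definition of the minimum distance, and ${n-d^\perp\brack n-d-d^\perp}\ge 1$ because $0\le n-d-d^\perp\le n-d^\perp$, so the sum is strictly positive and $A_d(\mC)>(q^\rho-1){n\brack d}$. Together with $d^\perp\le n-d+1$ (Corollary~\ref{rdef0} applied to $\mC^\perp$), this yields the desired equivalence. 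The only delicate points are the bookkeeping with ceilings and remainders in the first step and the verification that the relevant Gaussian coefficient does not vanish; both are routine, so I do not expect a genuine obstacle here.
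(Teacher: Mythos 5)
Your proof is correct and follows essentially the same route as the paper: you apply the general MacWilliams identity with $r=n-d$ (the paper's $r=\alpha$), obtain the same key equation ${n\brack d}+A_d(\mC)=q^\rho\bigl({n\brack d}+\sum_j A_j(\mC^\perp){n-j\brack n-d-j}\bigr)$, and conclude by positivity of the tail sum exactly as the paper does. The only differences are cosmetic (you index the tail sum from $d^\perp$ rather than from $1$, and you spell out the ceiling bookkeeping that the paper leaves implicit).
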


\begin{proof}
 Write $t=\alpha m + \rho$ with $\alpha \in \N$. Since $\mC$ is QMRD, we
have $d=n-\alpha$. Moreover, since $\dim(\mC^\perp)=nm-t$, the code
$\mC^\perp$ is QMRD if and only if it has minimum distance $d^\perp=\alpha+1$. 
Theorem \ref{MacWilliams identities} with $r=\alpha$ gives
\begin{equation} \label{eqq}
 {n\brack \alpha}+A_d(\mC)=q^\rho \left( {n\brack \alpha} + \sum_{j=1}^\alpha A_j(\mC^\perp) 
{n-j\brack \alpha-j}\right).
\end{equation}
Since $\alpha \le n$, we have ${n-j\brack \alpha-j}>0$ for $j \in \{1,...,\alpha\}$. Therefore
$\mC^\perp$ is QMRD if and only if $\sum_{j=1}^\alpha A_j(\mC^\perp) 
{n-j\brack \alpha-j}=0$. 
Hence $\mC^\perp$ is QMRD if and only if
$$A_d(\mC)=(q^\rho-1) {n\brack \alpha}.$$
By Lemma~\ref{progauss} we have  ${n\brack \alpha}= {n\brack d}$, and the result follows.
\end{proof}

\begin{remark}
Following the notation of Proposition \ref{dualqmrd}, the code $\mC$ of Example \ref{exdualqmrd}
has $\rho=1$. One can check that $A_d(\mC)=9 \neq  {3\brack 2} (2^1-1)=7$, in 
accordance with the fact that
$\mC^\perp$ is not QMRD.
\end{remark}

The following is a simple consequence of Theorem~\ref{sbound}.

\begin{corollary} \label{d+2} 
Let $\mC$ be a Delsarte code of dimension $t$, with minimum distance $d$ and dual minimum distance $d^\perp$. 
The following hold:
\begin{enumerate}
\item If $m\mid t$, then either $d+d^\perp=n+2$ or $d+d^\perp \le n$.
\item If $m\nmid t$, then $d+d^\perp \le n+1$.
\end{enumerate}
\end{corollary}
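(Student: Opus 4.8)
The plan is to apply the Singleton bound of Corollary~\ref{rdef0} simultaneously to $\mC$ and to $\mC^\perp$, and then add the resulting inequalities. Since $\dim(\mC^\perp)=mn-t$ by Remark~\ref{props}, Corollary~\ref{rdef0} gives $d\le n-\lceil t/m\rceil+1$ and $d^\perp\le n-\lceil(mn-t)/m\rceil+1$. Because $(mn-t)/m=n-t/m$ and $n$ is an integer, one has $\lceil n-t/m\rceil=n-\lfloor t/m\rfloor$, so the second inequality simplifies to $d^\perp\le\lfloor t/m\rfloor+1$. Adding the two bounds yields $d+d^\perp\le n+2+\lfloor t/m\rfloor-\lceil t/m\rceil$. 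This single estimate is the backbone of both statements.

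For part (2), if $m\nmid t$ then $\lceil t/m\rceil-\lfloor t/m\rfloor=1$, and the displayed estimate collapses to $d+d^\perp\le n+1$, which is exactly the claim. For part (1), if $m\mid t$ then $\lceil t/m\rceil=\lfloor t/m\rfloor$, and the same estimate gives $d+d^\perp\le n+2$. It therefore remains to exclude the single intermediate value $d+d^\perp=n+1$; this is the crux of the argument.

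To rule out $d+d^\perp=n+1$ when $m\mid t$, I would argue by a dichotomy on whether $\mC$ is MRD, relying on Theorem~\ref{dualmrd}. If $\mC$ is MRD, then $d=n-t/m+1$; moreover $\mC^\perp$ is MRD by Theorem~\ref{dualmrd}, and since $\dim(\mC^\perp)=mn-t$ is also divisible by $m$, one computes $d^\perp=t/m+1$, so that $d+d^\perp=n+2$. If instead $\mC$ is not MRD, the key point is that $m\mid t$ forces $\mC$ to have positive rank defect (it cannot be QMRD, as that requires $m\nmid t$), hence $d\le n-t/m$; and since $\mC^\perp$ is not MRD either by Theorem~\ref{dualmrd}, and $m\mid(mn-t)$, the same reasoning gives $d^\perp\le t/m$, whence $d+d^\perp\le n$. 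Thus the value $n+1$ never occurs. The main obstacle is conceptual rather than computational: one must notice that, under the divisibility hypothesis, neither code can be QMRD, so that the self-duality of the MRD property (Theorem~\ref{dualmrd}) ties the two Singleton bounds together and prevents exactly one of them from being tight.
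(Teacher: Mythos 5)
Your proof is correct and follows essentially the same route as the paper: both apply the Singleton bound (Theorem~\ref{sbound}, equivalently its ceiling form in Corollary~\ref{rdef0}) to $\mC$ and $\mC^\perp$, sum the two bounds, and then split on whether $\mC$ is MRD, using Theorem~\ref{dualmrd} to transfer non-MRD-ness to the dual and integrality (in your phrasing, positivity of the rank defect) to drop from $n+2$ to $n$. The only cosmetic difference is that you get part (2) directly from the floor/ceiling discrepancy, whereas the paper deduces it from the observation that equality $d+d^\perp=n+2$ forces $\mC$ to be MRD and hence $m\mid t$.
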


\begin{proof}
Theorem \ref{sbound} applied to $\mC$ and $\mC^\perp$ gives 
\begin{equation} \label{1}
t \le
m(n-d+1) \ \ \ \ \ \ \ \mbox{and} \ \ \ \ \ \ \ nm-t \le m(n-d^\perp+1).
\end{equation}
Summing the two inequalities in (\ref{1}) we obtain $n \le 2n-d-d^\perp+2$, i.e. 
$d+d^\perp \le n+2$.
In addition, $d+d^\perp = n+2$ if and only if both inequalities are equalities, 
which implies that $\mC$ is MRD, \
hence $m\mid t$. This proves part 2. If $t=km$, then the 
inequalities (\ref{1}) become
$k \le n-d+1$ and $n-k \le n-d^\perp+1$. Hence if $\mC$ is not MRD we have
$k \le n-d$ and $n-k \le n-d^\perp$. It follows that $d+d^\perp \le n$.
\end{proof}

It is now easy to characterize the codes $\mC$ such that both $\mC$ and $\mC^\perp$ have
rank defect zero as those which are MRD or dually QMRD.
In Corollary \ref{compl_det} we will show that the
rank distribution of such codes is determined by $n$, $m$ and $d$.

\begin{proposition} \label{carattdually} 
Let $\mC$ be a Delsarte code with minimum distance $d$ and dual minimum distance $d^\perp$. 
The following hold:
\begin{enumerate}
\item $\mC$ is MRD iff $\mC^\perp$ is MRD iff $d+d^\perp=n+2$.
\item $\mC$ is dually QMRD iff $d+d^\perp=n+1$.
\end{enumerate}
\end{proposition}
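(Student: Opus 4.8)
The plan is to dispatch part 1 using Theorem~\ref{dualmrd} together with the inequalities already recorded in the proof of Corollary~\ref{d+2}. The equivalence ``$\mC$ MRD $\iff$ $\mC^\perp$ MRD'' is precisely Theorem~\ref{dualmrd}, so only the link with $d+d^\perp=n+2$ remains. Summing the two bounds in (\ref{1}) gives $d+d^\perp\le n+2$, with equality exactly when both bounds are tight; the first bound is tight iff $\mC$ is MRD and the second iff $\mC^\perp$ is MRD. Hence $d+d^\perp=n+2$ holds iff $\mC$ is MRD, which by Theorem~\ref{dualmrd} is the same as $\mC^\perp$ being MRD.

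For part 2 I would first settle the forward implication by explicit computation. A dually QMRD code has $m\nmid t$, so I can write $t=\alpha m+\rho$ with $0<\rho<m$; then $\lceil t/m\rceil=\alpha+1$ and $\lceil (nm-t)/m\rceil=n-\alpha$. From $\rdef(\mC)=0$ and $\rdef(\mC^\perp)=0$ (Definition~\ref{rkdef}) I read off $d=n-\alpha$ and $d^\perp=\alpha+1$, whence $d+d^\perp=n+1$.

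Conversely, assume $d+d^\perp=n+1$. Since this differs from $n+2$, part 1 shows $\mC$ is not MRD, and part (1) of Corollary~\ref{d+2} then excludes $m\mid t$ (which would force $d+d^\perp$ to equal $n+2$ or be at most $n$). So $m\nmid t$, and with $t=\alpha m+\rho$ as above I obtain $\rdef(\mC)=n-\alpha-d$ and $\rdef(\mC^\perp)=\alpha+1-d^\perp$. The crux of the argument is that these add up to $(n+1)-(d+d^\perp)=0$; since each defect is non-negative by Corollary~\ref{rdef0}, both must vanish, and combined with $m\nmid t$ this says exactly that $\mC$ and $\mC^\perp$ are QMRD.

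The one point demanding care---and the only genuine obstacle---is the ceiling bookkeeping for the dual dimension $nm-t$: I must check that $m\nmid t$ forces $m\nmid(nm-t)$ and that $\lceil (nm-t)/m\rceil=n-\alpha$, so that the two rank defects telescope to $(n+1)-(d+d^\perp)$. Once this identity is in hand, both directions of part 2 follow from non-negativity of the defect alone, with no input beyond Theorem~\ref{dualmrd} and Corollary~\ref{d+2}.
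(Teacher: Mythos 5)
Your proof is correct and follows essentially the same route as the paper: part 1 combines Theorem~\ref{dualmrd} with the tightness analysis of the two Singleton-type bounds from the proof of Corollary~\ref{d+2}, and part 2 sums the bounds of Corollary~\ref{rdef0} for $\mC$ and $\mC^\perp$ under $m\nmid t$. Your phrasing via ``$\rdef(\mC)+\rdef(\mC^\perp)=0$ forces both defects to vanish'' is just the paper's ``both inequalities are equalities'' in rank-defect language, and your ceiling computation $\lceil (nm-t)/m\rceil=n-\alpha$ is the required bookkeeping, correctly carried out.
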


\begin{proof}
1. In the proof of Corollary~\ref{d+2} we showed that $d+d^\perp=n+2$ implies $\mC$ MRD hence, 
by symmetry, $\mC^\perp$ MRD. Conversely, if one of $\mC$ and $\mC^\perp$ is MRD, then the 
dual is also MRD 
by Theorem~\ref{dualmrd}. Therefore 
$$d+d^\perp=n-\frac{t}{m}+1+n-\frac{nm-t}{m}+1=n+2.$$
2. If $d+d^\perp=n+1$, then $m\nmid t$. The bound of Corollary~\ref{rdef0} 
yields
$$d\leq n-\left\lceil\frac{t}{m}\right\rceil+1 \ \ \ \ \mbox{ and } \ \
\ \ d^\perp\leq \left\lfloor\frac{t}{m}\right\rfloor+1.$$ Therefore both
inequalities are equalities and $\mC$ and $\mC^\perp$ are QMRD.
Conversely, if $\mC$ and $\mC^\perp$ are QMRD, then $m\nmid t$ and
$$d+d^\perp=n-\left\lceil\frac{t}{m}\right\rceil+1+\left\lfloor\frac{
t}{m}\right\rfloor+1=n+1.$$
\end{proof}

Generalized weights for Delsarte codes were introduced in \cite{a2}, refining 
previous definitions for Gabidulin codes.
We conclude this section with some observations on the connection between 
rank defect and generalized weights. 

\begin{definition}
An \textbf{optimal anticode} $\mA \subseteq \mbox{Mat}$ is a Delsarte code such that
$\dim(\mA)=m \cdot \mbox{maxrk}(\mA)$, where
$\mbox{maxrk}(\mA):= \max \{ \mbox{rk}(M) : M \in \mA\}$.

Given a Delsarte code $\mC$ of dimension $t$ and an integer $1 \le r \le t$, the \textbf{$r$-th generalized weight}
of $\mC$ is
$$a_r(\mC):= \frac{1}{m} \min \{ \dim(\mA) : \mA \subseteq \mbox{Mat} \mbox{ is an anticode with 
} \dim(\mC \cap \mA) \ge r\}.$$
\end{definition}

The next result relates the generalized weights of a code to the 
minimum distance and the rank defect of the dual code.

\begin{proposition}
Let $\mC$ be a Delsarte code with minimum distance $d$ and dimension $t$.
If $t < m$ then $d^\perp=1$ and $\rdef(\mC^\perp)=0$. If $t \ge m$,
write $t=\alpha m + \rho$ with $0 \le \rho < m$.
The minimum distance of the dual code $\mC^\perp$ is
$$d^\perp= \left\{ \begin{array}{ll}
                    \alpha+1 & \mbox{ if } n+1-a_{t+1-um}(\mC)=\alpha, \\ 
\min \{ 1 \le r \le \alpha : n+1-a_{t+1-rm}(\mC)>r\} & \mbox{ otherwise,} \end{array} \right.\ $$
and the rank defect of $\mC^\perp$ is
$$\rdef({\mC^\perp})=\alpha+1-d^\perp.$$
\end{proposition}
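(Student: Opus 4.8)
The plan is to translate the computation of $d^\perp$ into a statement about the generalized weights of $\mC$ by dualizing optimal anticodes. I first dispose of the two routine parts. If $t<m$ then $\dim(\mC^\perp)=nm-t>(n-1)m$, so Corollary~\ref{rdef0} applied to $\mC^\perp$ yields $d^\perp\le n-\lceil(nm-t)/m\rceil+1=1$, whence $d^\perp=1$; substituting into Definition~\ref{rkdef} gives $\rdef(\mC^\perp)=0$. For the final identity I would, for any $t\ge m$, simply note $\lceil(nm-t)/m\rceil=n-\alpha$ and substitute $\dim(\mC^\perp)=nm-t$ into Definition~\ref{rkdef}, obtaining $\rdef(\mC^\perp)=n-(n-\alpha)-d^\perp+1=\alpha+1-d^\perp$; this step is independent of the formula for $d^\perp$.

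The core of the proof is the value of $d^\perp$. The two facts I would rely on, both from the theory of optimal anticodes in \cite{a2}, are: (S1) trace duality $\mA\mapsto\mA^\perp$ sends an optimal anticode of maximum rank $s$ to an optimal anticode of maximum rank $n-s$; and (S2) every optimal anticode of maximum rank $s$ embeds into one of maximum rank $s+1$ (so the column-type anticodes $\{M:\mathrm{colsp}(M)\subseteq U\}$ sit in flags). Combining these with the modular identity $\dim(\mC^\perp\cap\mA^\perp)=\dim(\mC\cap\mA)+nm-t-\dim(\mA)$, which follows from $(\mC+\mA)^\perp=\mC^\perp\cap\mA^\perp$ in Remark~\ref{props}, is the engine of the argument. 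Writing $\phi(r):=n+1-a_{t+1-rm}(\mC)-r$ (with the convention $a_j(\mC)=0$ for $j\le 0$), I would prove that $d^\perp=\min\{r\ge 1:\phi(r)>0\}$.

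To establish this I argue in three steps. First, if $\phi(r)>0$, i.e.\ $a_{t+1-rm}(\mC)\le n-r$, pick an optimal anticode $\mA$ of maximum rank $a_{t+1-rm}(\mC)$ with $\dim(\mC\cap\mA)\ge t+1-rm$ and, using (S2), enlarge it to maximum rank exactly $n-r$ without shrinking the intersection; then by (S1) the anticode $\mA^\perp$ has maximum rank $r$, and the modular identity gives $\dim(\mC^\perp\cap\mA^\perp)\ge 1$, so $d^\perp\le r$. Second, taking $N\in\mC^\perp$ with $\rk(N)=d^\perp$ and letting $\mB$ be the column-type optimal anticode of maximum rank $d^\perp$ determined by its column space, the same identity applied to $\mB$ gives $\dim(\mC\cap\mB^\perp)\ge t+1-d^\perp m$, whence $a_{t+1-d^\perp m}(\mC)\le n-d^\perp$, that is $\phi(d^\perp)>0$. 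Third, I show $\phi$ is nondecreasing: decreasing the index by $m$ lowers a generalized weight by at least $1$ (a sub-anticode of codimension $m$, obtained by dualizing (S2) through (S1), captures at most $m$ fewer dimensions of $\mC$), so $a_{t+1-(r+1)m}(\mC)\le a_{t+1-rm}(\mC)-1$ and thus $\phi(r+1)\ge\phi(r)$. Together these give $d^\perp=\min\{r\ge 1:\phi(r)>0\}$.

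It remains to match this minimum with the two cases. I would check the two boundary values: $\phi(\alpha+1)>0$ always, since the index $t+1-(\alpha+1)m=\rho+1-m\le 0$ forces $a=0$ and $\alpha<n$ (as $t<nm$); and $\phi(\alpha)\ge 0$ always, because a column-type optimal anticode of maximum rank $n-\alpha+1$ meets $\mC$ in at least $t+(n-\alpha+1)m-nm=\rho+m\ge\rho+1$ dimensions, giving the universal bound $a_{\rho+1}(\mC)\le n-\alpha+1$ (here $\rho+1=t+1-\alpha m$). Since $\phi$ is nondecreasing with $\phi(\alpha+1)>0$, either $\phi(\alpha)>0$, in which case $d^\perp=\min\{1\le r\le\alpha:\phi(r)>0\}$ is the ``otherwise'' value, or $\phi(\alpha)=0$ (the only alternative, as $\phi(\alpha)\ge 0$), in which case $d^\perp=\alpha+1$ and the equality $\phi(\alpha)=0$ reads $n+1-a_{\rho+1}(\mC)=\alpha$, which is exactly the stated condition for the first case. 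I expect the main obstacle to be the two structural inputs (S1) and (S2): they rest on the classification of optimal rank-metric anticodes, and one must be careful that the extremal anticodes computing $a_r(\mC)$ can indeed be dualized and enlarged while controlling their intersection with $\mC$.
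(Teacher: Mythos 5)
Your proof is correct, but it takes a genuinely different route from the paper's. The paper's proof is essentially two citations to \cite{a2}: Theorem 66 gives the strictly increasing chain $d^\perp=a_1(\mC^\perp)<a_{1+m}(\mC^\perp)<\cdots\le n$, and Corollary 78 gives the Wei-type duality $W_1(\mC^\perp)=[n]\setminus\overline{W}_{1+t}(\mC)$; the displayed formula for $d^\perp$ is then pure bookkeeping, since $d^\perp$ is the smallest element of $[n]$ not of the form $n+1-a_{t+1-rm}(\mC)$ with $1\le r\le\alpha$, and the first case of the formula (where the paper's stray index $u$ must be read as $\alpha$) is exactly the case in which these values fill out $\{1,\dots,\alpha\}$. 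You instead reconstruct precisely the piece of that duality which is needed, from lower-level ingredients: duality of optimal anticodes (your (S1)), the flag property (your (S2)), and the modular dimension identity $\dim(\mC^\perp\cap\mA^\perp)=\dim(\mC\cap\mA)+nm-t-\dim(\mA)$ from Remark~\ref{props}. Your Steps 1 and 2 alone already yield the clean intermediate characterization $d^\perp=\min\{r\ge 1:\ n+1-a_{t+1-rm}(\mC)>r\}$, and Step 3 together with the boundary estimates $\phi(\alpha)\ge 0$ and $\phi(\alpha+1)>0$ correctly reconciles this with the two-case formulation of the statement; all the dimension counts check out, including the edge cases where $t+1-rm\le 0$ handled by your convention. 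What the paper's route buys is brevity; what yours buys is self-containedness and a more transparent statement of what $d^\perp$ is: you never invoke the full weight-set duality of \cite[Corollary 78]{a2}, only that duals of optimal anticodes are optimal anticodes and that optimal anticodes embed in larger ones, both consequences of the classification of optimal anticodes (column-space anticodes, plus their row analogues when $n=m$). Since that classification is also the foundation of the results of \cite{a2} that the paper cites, the dependence you flag at the end is real but no heavier than the paper's own.
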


\begin{proof}
If $t < m$, then $d^\perp=1$ by  Corollary \ref{rdef0}. Assume therefore that
$t \ge m$. By Theorem 66 of
\cite{a2} we have
$$d^\perp=a_1(\mC^\perp) < a_{1+m}(\mC^\perp) < \cdots < a_{1+(n-\alpha-1)m}(\mC^\perp) \le n.$$
Define
$$W_1(\mathcal{C}^\perp):=\left\{a_1(\mathcal{C}^\perp),a_{1+m}(\mathcal{C}
^\perp),\ldots,a_{1+(n-\alpha-1)m}(\mathcal{C}^\perp)\right\}$$
and
$$\overline{W}_{1+t}(\mathcal{C}):=\left\{n+1-a_{t+1-m}(\mathcal{C}),n+1-a_{
t+1-2m}(\mathcal{C}),\ldots,n+1-a_{t+1-\alpha m}(\mathcal{C})\right\}.$$
By \cite[Corollary 78]{a2} we have that 
$W_1(\mathcal{C}^\perp)=[n]\setminus \overline{W}_{1+t}(\mathcal{C})$. Hence 
the result on the minimum distance follows from the fact that
$$n+1-a_{t+1-m}(\mathcal{C}) < n+1-a_{
t+1-2m}(\mathcal{C}) < \ldots < n+1-a_{t+1-\alpha m}(\mathcal{C})$$
by Theorem 66 of \cite{a2}. The formula for the rank defect now follows from the definition.
\end{proof}

\section{Rank distribution of Delsarte codes} \label{sec2}

In this section we prove that the rank distribution of a code $\mC$ is determined by 
its parameters, together with the number of codewords of small weight: 
$A_d(\mC),\ldots, A_{n-d^\bot}(\mC)$. In particular, we show that the rank distribution of
MRD and dually QMRD codes only depends on their parameters. We start with a
series of preliminary definitions and results.

\begin{lemma}\label{matrix-inverse}
Let $a \ge 1$ be an integer and let $M,N$ be the real $a \times a$ matrices
defined by $M_{ij}:= {j-1 \brack i-1}$ and ${N}_{ij}:=(-1)^{j-i}q^{j-i \choose
2}{j-1\brack i-1}$  {for $i,j \in \left[a\right]$}.
Then $MN=NM=I_{a}.$
\end{lemma}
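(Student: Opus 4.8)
The plan is to prove $MN=I_a$ directly, by computing the $(i,k)$ entry of the product and collapsing it with the $q$-binomial identities of Lemma~\ref{progauss}; the equality $NM=I_a$ then comes for free, since $M$ and $N$ are square matrices over $\R$ and a one-sided inverse of a square matrix is automatically two-sided.

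First I would write, for $i,k \in [a]$,
\[
(MN)_{ik}=\sum_{j=1}^a M_{ij}N_{jk}=\sum_{j=1}^a {j-1 \brack i-1}(-1)^{k-j}q^{\binom{k-j}{2}}{k-1 \brack j-1}.
\]
Since ${j-1 \brack i-1}$ vanishes unless $j\ge i$ and ${k-1 \brack j-1}$ vanishes unless $j\le k$, both $M$ and $N$ are upper triangular; in particular $(MN)_{ik}=0$ whenever $i>k$, which already matches the identity below the diagonal. For $i\le k$ the effective summation range is $i\le j\le k$.

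The key step is to apply part (\ref{bin3}) of Lemma~\ref{progauss} to the product of Gaussian coefficients. Taking $a=k-1$, $b=j-1$, $r=i-1$ in that identity gives ${k-1 \brack j-1}{j-1 \brack i-1}={k-1 \brack i-1}{k-i \brack k-j}$, which lets me pull the factor ${k-1 \brack i-1}$ out of the sum:
\[
(MN)_{ik}={k-1 \brack i-1}\sum_{j=i}^{k}(-1)^{k-j}q^{\binom{k-j}{2}}{k-i \brack k-j}.
\]
Substituting $s=k-j$ turns the inner sum into $\sum_{s=0}^{k-i}(-1)^{s}q^{\binom{s}{2}}{k-i \brack s}$, which is exactly the alternating sum appearing in part (\ref{bin5}) of Lemma~\ref{progauss}.

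Finally I would split into cases according to $k-i$. If $k>i$, then $k-i\ge 1$ and part (\ref{bin5}) forces the inner sum to be $0$, so $(MN)_{ik}=0$. If $k=i$, the inner sum reduces to its single $s=0$ term, which equals $1$, while the prefactor ${i-1 \brack i-1}=1$ by part (\ref{bin1}); hence $(MN)_{ii}=1$. Together with the vanishing below the diagonal this yields $MN=I_a$, and therefore $NM=I_a$. The only delicate point is the bookkeeping of the index shifts and of the sign and power $q^{\binom{\cdot}{2}}$ under the change of variables; once the product of binomials has been rewritten via part (\ref{bin3}), identity (\ref{bin5}) does all the remaining work.
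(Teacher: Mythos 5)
Your proposal is correct and follows essentially the same route as the paper's proof: reduce to showing $MN=I_a$ (square matrices), rewrite the product of Gaussian coefficients via part (\ref{bin3}) of Lemma~\ref{progauss} to pull out a common factor, and collapse the resulting alternating sum with part (\ref{bin5}). The index bookkeeping, the change of variables, and the case analysis all check out, so nothing needs to be added.
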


\begin{proof}
Since $M$ and $N$ are square matrices, it suffices to prove that $MN=I_a$.
One can easily check that
$${(MN)}_{ij} =\sum_{r=0}^{a-1}(-1)^{j-r}q^{{j-r\choose2}}{j\brack r}{r\brack
i}.$$
By Lemma \ref{progauss}.\ref{bin3} we have 
 \begin{equation*}
   {(MN)}_{ij}= \sum_{r=0}^{a-1}(-1)^{j-r}q^{{j-r\choose2}}{j\brack r}{r\brack
i} = {j \brack i}\sum_{r=0}^j(-1)^{j-r}q^{{j-r\choose2}}{j-i\brack j-r}.
  \end{equation*}
Hence  ${(MN)}_{ii}=1$ for $i=1,...,{a}$. If $i>j$, then ${(MN)}_{ij}=0$. 
If $i<j$, then
\begin{eqnarray*}
 {(MN)}_{ij} &=& {j \brack i}\sum_{r=0}^j(-1)^{j-r}q^{{j-r\choose2}}{j-i\brack
j-r} \\
&=& {j \brack i}\sum_{r=0}^j(-1)^{r}q^{{r\choose2}}{j-i\brack r} \\
&=& {j \brack i}\sum_{r=0}^{j-i}(-1)^{r}q^{{r\choose2}}{j-i\brack r} \\
&=& 0,
\end{eqnarray*}
where the last equality follows from Lemma \ref{progauss}.\ref{bin5}.
\end{proof}

\begin{lemma}\label{sume}
Let $a \in \N_{\ge 1}$. For $j = 0,..., a$ we have
$$\sum_{i=0}^j (-1)^i q ^{{i \choose 2}}{a \brack i}=(-1)^j q^{ {j+1\choose
2}}{a-1 \brack j}.$$
\end{lemma}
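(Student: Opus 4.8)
The plan is to prove the identity by induction on $j$, with the Pascal-type recurrence in Lemma~\ref{progauss}.\ref{bin4} carrying the entire inductive step.

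First I would dispose of the base case $j=0$. Here the left-hand side is $(-1)^0 q^{{0 \choose 2}}{a \brack 0}=1$ and the right-hand side is $(-1)^0 q^{{1 \choose 2}}{a-1 \brack 0}=1$, since ${1 \choose 2}=0$ and ${a \brack 0}={a-1 \brack 0}=1$ by Lemma~\ref{progauss}.\ref{bin1}. Thus both sides agree.

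For the inductive step, fix $1 \le j \le a$ and assume the statement for $j-1$. Peeling off the top term and applying the inductive hypothesis to the truncated sum gives
$$\sum_{i=0}^j (-1)^i q^{{i \choose 2}}{a \brack i} = (-1)^{j-1}q^{{j \choose 2}}{a-1 \brack j-1} + (-1)^j q^{{j \choose 2}}{a \brack j}.$$
I want to show that this equals $(-1)^j q^{{j+1 \choose 2}}{a-1 \brack j}$. Factoring out $(-1)^j q^{{j \choose 2}}$, the claim reduces to verifying
$${a \brack j} - {a-1 \brack j-1} = q^{{j+1 \choose 2}-{j \choose 2}}\,{a-1 \brack j}.$$
Since ${j+1 \choose 2}-{j \choose 2}=j$, the right-hand side is $q^j {a-1 \brack j}$, and the desired equality is precisely the rearrangement of ${a \brack j}=q^j{a-1 \brack j}+{a-1 \brack j-1}$, which is Lemma~\ref{progauss}.\ref{bin4} (applicable because $a,j \ge 1$). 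This closes the induction.

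The only point that requires care is the bookkeeping of the exponent of $q$: one must observe that the shift ${j+1 \choose 2}-{j \choose 2}=j$ matches exactly the factor $q^j$ attached to ${a-1 \brack j}$ in the recurrence. Beyond spotting this match there is no real obstacle, and the whole argument is a short induction.
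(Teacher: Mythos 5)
Your proof is correct and follows essentially the same route as the paper's: induction on $j$, splitting off the $i=j$ term, and invoking the recurrence ${a \brack j}=q^j{a-1 \brack j}+{a-1 \brack j-1}$ (Lemma~\ref{progauss}.\ref{bin4}) to close the step. The paper simply presents the same computation as a single chain of equalities, factoring out $(-1)^j q^{{j+1 \choose 2}}q^{-j}$ instead of $(-1)^j q^{{j \choose 2}}$, so there is nothing to add.
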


\begin{proof}
By induction on $j$. If $j = 0$ then the result is immediate.
Assume $j>0$. By induction hypothesis and Lemma~\ref{progauss}.\ref{bin4}
we have
\begin{eqnarray*}
 \sum_{i=0}^j (-1)^i q ^{{i \choose 2}}{a \brack i} &=& (-1)^{j-1} q^{{ j
\choose 2}} {a-1 \brack j-1}+ (-1)^j q^{{ j \choose 2}} {a \brack j} \\
&=& (-1)^{j} q^{{ j+1 \choose 2}} q^{-j} \left( {a \brack j}-{a-1 \brack j-1} 
\right) \\
&=& (-1)^{j} q^{{ j+1 \choose 2}}{a-1 \brack j},
\end{eqnarray*}
as claimed.
\end{proof}

Now we state our main result.

\begin{theorem}\label{04-02-15}
Let $\mC$ be a $t$-dimensional code, with minimum distance $d$ and dual minimum distance $d^\perp$.
Let $\delta=1$ if $d+d^\perp=n+2$, and $\delta=0$ otherwise.
For all $1 \le r \le d^\perp$ we have
\begin{eqnarray*}
A_{n-d^\bot+r}(\mC) &=& (-1)^{r}q^{r \choose 2}
\sum\limits_{j=d^\bot}^{n-d}{j\brack d^\bot-r}  {j-d^\bot+r-1\brack
r-1}A_{n-j}(\mC)\\
&& + {n\brack d^\bot-r}\sum\limits_{i=0}^{r-1-\delta} (-1)^iq^{i \choose 2
}{n-d^\bot+r\brack i}    \left( \frac{|\mC|}{q^{m(d^\perp-r+i)}}  -1 \right).
\end{eqnarray*}
In particular, $n$, $m$, $t$, $d$, $d^\perp$ and $A_d(\mC),\ldots, A_{n-d^\bot}(\mC)$ 
determine the rank
distribution of $\mC$.
\end{theorem}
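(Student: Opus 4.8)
The plan is to read the claimed formula off the MacWilliams identities of Theorem~\ref{MacWilliams identities} by solving an explicit linear system. First I would write down, for each $r=0,\ldots,d^\perp-1$, the identity
$$\sum_{i=d}^{n-r}{n-i\brack r}A_i(\mC)=\left(\frac{|\mC|}{q^{mr}}-1\right){n\brack r},$$
and split the left-hand sum at $i=n-d^\perp$ into a \emph{low-weight} part, involving only $A_d(\mC),\ldots,A_{n-d^\perp}(\mC)$, and a \emph{high-weight} part, involving the $d^\perp$ unknowns $A_{n-d^\perp+1}(\mC),\ldots,A_n(\mC)$. Substituting $i=n-d^\perp+s$ in the high-weight part shows that the coefficient of $A_{n-d^\perp+s}(\mC)$ in the $r$-th identity is ${d^\perp-s\brack r}$, which vanishes for $s>d^\perp-r$; hence the high-weight unknowns satisfy a square linear system of size $d^\perp$ whose coefficient matrix is, after the reindexing $i\leftrightarrow r+1$ and $j\leftrightarrow d^\perp-s+1$, precisely the matrix $M$ of Lemma~\ref{matrix-inverse} with $a=d^\perp$.

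Second, I would invert this system using Lemma~\ref{matrix-inverse}, which gives $M^{-1}=N$ with $N_{ij}=(-1)^{j-i}q^{\binom{j-i}{2}}{j-1\brack i-1}$. Applying $N$ to the right-hand vector yields, for each target $\rho$ with $1\le\rho\le d^\perp$, an expression for $A_{n-d^\perp+\rho}(\mC)$ as a sum over $r=d^\perp-\rho,\ldots,d^\perp-1$ of the corresponding right-hand sides, with coefficients $(-1)^{r-d^\perp+\rho}q^{\binom{r-d^\perp+\rho}{2}}{r\brack d^\perp-\rho}$. This expression splits into the contribution of the low-weight terms $-\sum_k{n-k\brack r}A_k(\mC)$ and the contribution of the terms $\left(|\mC|q^{-mr}-1\right){n\brack r}$.

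Third, and this is the computational heart, I would simplify each contribution into the closed form of the statement. For the low-weight contribution, after swapping the order of summation and setting $j=n-k$, the inner sum over $r$ (write $r=d^\perp-\rho+p$) becomes $\sum_p(-1)^pq^{\binom{p}{2}}{d^\perp-\rho+p\brack d^\perp-\rho}{j\brack d^\perp-\rho+p}$; applying Lemma~\ref{progauss}.\ref{bin3} to pull out ${j\brack d^\perp-\rho}$ and then Lemma~\ref{progauss}.\ref{bin2} reduces the remaining sum to $\sum_p(-1)^pq^{\binom{p}{2}}{j-d^\perp+\rho\brack p}$, which Lemma~\ref{sume} (with top entry $j-d^\perp+\rho$ and upper limit $\rho-1$) collapses to $(-1)^{\rho-1}q^{\binom{\rho}{2}}{j-d^\perp+\rho-1\brack\rho-1}$. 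Combined with the overall sign this produces exactly the first sum of the statement. The second contribution is handled by the same use of Lemma~\ref{progauss}.\ref{bin3} and Lemma~\ref{progauss}.\ref{bin2}, turning ${d^\perp-\rho+i\brack d^\perp-\rho}{n\brack d^\perp-\rho+i}$ into ${n\brack d^\perp-\rho}{n-d^\perp+\rho\brack i}$ and yielding the second sum, a priori with upper index $\rho-1$.

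Finally, I would justify the upper limit $\rho-1-\delta$: when $\delta=1$ we have $d+d^\perp=n+2$, so $\mC$ is MRD by Proposition~\ref{carattdually} and $|\mC|=q^{m(n-d+1)}$. Then the $i=\rho-1$ summand carries the factor $\frac{|\mC|}{q^{m(d^\perp-1)}}-1=q^{m(n-d-d^\perp+2)}-1=0$, so it drops out, which is exactly the adjustment recorded by $\delta$. The last assertion of the theorem is then immediate, since every quantity on the right-hand side is a function of $n,m,t,d,d^\perp$ and the values $A_d(\mC),\ldots,A_{n-d^\perp}(\mC)$. I expect the main difficulty to be bookkeeping the several index substitutions so that the Gaussian-binomial identities of Lemma~\ref{progauss} and Lemma~\ref{sume} apply cleanly; the inversion itself is immediate once the system is matched to Lemma~\ref{matrix-inverse}.
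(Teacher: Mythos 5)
Your proposal is correct and takes essentially the same route as the paper: both solve the second form of the MacWilliams identities (Theorem~\ref{MacWilliams identities}) by isolating the high-weight unknowns in a square system, inverting the Gaussian-coefficient block via Lemma~\ref{matrix-inverse}, and collapsing the resulting sums with Lemma~\ref{progauss}.\ref{bin3} and Lemma~\ref{sume}. The only divergence is the case $\delta=1$, which the paper dismisses as ``analogous and easier'' and which you instead deduce from the $\delta=0$ computation by noting that the top summand vanishes for MRD codes --- a clean shortcut, valid once one pads the square system with the extra unknown $A_{n-d^\perp+1}(\mC)=A_{d-1}(\mC)=0$, since for $d+d^\perp=n+2$ the MacWilliams sums start at $i=d>n-d^\perp$.
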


\begin{proof} 
We only prove the theorem in the case $d+d^\perp\leq n+1$. 
The proof in the case $d+d^\perp=n+2$ is analogous and easier.
Let $A, B$ be the real matrices of size $d^\bot \times d^\bot$ and $d^\bot
\times \left(n-d-d^\bot+1\right)$, defined by $A_{r,j}={j-1\brack r-1}$ and
$B_{r,i}={i+d^\bot-1\brack r-1}$
for $r,j\in\left[d^\bot\right]$ and $i\in\left[n-d-d^\bot+1\right]$.
When $d+d^\perp = n+1$ we only have the matrix $A$.
Throughout the proof we write $A_i$ for $A_i(\mC)$.
The second part of  the statement of Theorem~\ref{MacWilliams identities} 
may be written in the form 
$$\left(A \left| B \right)\right.
\left(A_n,\ldots,A_d\right)^t
=\left(\left(|\mC|-1\right){n\brack
0},\ldots,\left(\frac{|\mC|}{q^{m(n-d)}}-1\right){n\brack
d^\perp-1}\right)^t.$$
Multiplying by $A^{-1}$ we get
$$(I_{d^\bot} |A^{-1}B)\left(A_n,\ldots,A_d\right)^t
= A^{-1}\left(\left(|\mC|-1\right){n\brack
0},\ldots,\left(\frac{|\mC|}{q^{m(n-d)}}-1\right){n\brack
d^\perp-1}\right)^t.
$$
By Lemma~\ref{matrix-inverse}, ${(A^{-1})}_{r,j}= (-1)^{j-r}q^{j-r \choose
2}{j-1\brack r-1}$ for $r,j \in \left[d^\bot\right]$. Hence
for $1\leq r<d^\bot+1 \leq j \leq n-d+1$ the entry of
the matrix $(I_{d^\bot} |A^{-1}B)$ in position $\left(r,j\right)$ is 
\begin{eqnarray*}
(I_{d^\bot} |A^{-1}B)_{r,j}&=&\sum\limits_{i=0}^{d^\bot-1} (-1)^{i-r+1}q^{i-r+1
\choose 2}{i\brack r-1}{j-1\brack i}\\
&=& \sum\limits_{i=0}^{n-d} (-1)^{i-r+1}q^{i-r+1 \choose 2}{i\brack
r-1}{j-1\brack i}\\
&&-\sum\limits_{i=d^\bot}^{n-d} (-1)^{i-r+1}q^{i-r+1 \choose 2}{i\brack
r-1}{j-1\brack i}\\
&=& \delta_{r-1,j-1}-\sum\limits_{i=d^\bot}^{n-d} (-1)^{i-r+1}q^{i-r+1 \choose
2}{i\brack r-1}{j-1\brack i}\\
&=& - \sum\limits_{i=d^\bot}^{n-d} (-1)^{i-r+1}q^{i-r+1 \choose 2}{i\brack
r-1}{j-1\brack i}\\
&=& -\sum\limits_{i=d^\bot}^{j-1} (-1)^{i-r+1}q^{i-r+1 \choose 2}{i\brack
r-1}{j-1\brack i}.
\end{eqnarray*}
As a consequence, for $r=1,\ldots,d^\bot$ we obtain
\begin{eqnarray*}
A_{n-r+1}+\sum_{j=d^{\perp}+1}^{n-d+1}\left(-\sum_{i=d^{\perp}}^{j-1}(-1)^{i-r+1
}
q^{i-r+1 \choose 2}{i\brack r-1}{j-1\brack i}\right)A_{n-j+1}&=&\\
=\sum\limits_{j=1}^{d^\bot} (-1)^{j-r}q^{j-r \choose 2}{j-1\brack r-1}{n\brack
j-1}   \left( \frac{|\mC|}{q^{m(j-1)}}-1 \right)&&
\end{eqnarray*} 
or, equivalently, 
\begin{eqnarray*}
A_{n-d^\bot+r}
&+&\sum\limits_{j=d^\bot}^{n-d}\left(-\sum\limits_{i=d^\bot}^{j}
(-1)^{i-d^\bot+r}q^{i-d^{\perp}+r \choose 2}{i\brack d^\bot-r}{j\brack
i}\right)A_{n-j}=\\
&=&\sum\limits_{j=0}^{d^\bot-1} (-1)^{j-d^\bot+r}q^{j-d^{\perp}+r \choose
2}{j\brack d^\bot-r}{n\brack j}   \left(  \frac{|\mC|}{q^{mj}} -1  \right).\\
\end{eqnarray*}
By  Lemma~\ref{progauss} we have
\begin{eqnarray*}
A_{n-d^\bot+r} &=& \sum\limits_{j=d^\bot}^{n-d}\sum\limits_{i=d^\bot}^{j}
(-1)^{i-d^\bot+r}q^{i-d^{\perp}+r \choose 2}{i\brack d^\bot-r}{j\brack
i}A_{n-j}\\
&& +\sum\limits_{j=0}^{d^\bot-1} (-1)^{j-d^\bot+r}q^{j-d^{\perp}+r \choose
2}{j\brack d^\bot-r}{n\brack j}   \left(  \frac{|\mC|}{q^{mj}}  -1 \right)\\
&=& \sum\limits_{j=d^\bot}^{n-d}\sum\limits_{i=d^\bot}^{j}
(-1)^{i-d^\bot+r}q^{i-d^{\perp}+r \choose 2}{i\brack d^\bot-r}{j\brack
i}A_{n-j}\\
&& +\sum\limits_{i=0}^{r-1} (-1)^iq^{i\choose 2}{d^\bot-r+i\brack
d^\bot-r}{n\brack d^\bot-r+i}     \left(  \frac{|\mC|}{q^{m(d^\bot-r+i)}} -1 \right)\\
&=& \sum\limits_{j=d^\bot}^{n-d}{j\brack
d^\bot-r}\left(\sum\limits_{i=d^\bot}^{j} (-1)^{i-d^\bot+r}q^{i-d^{\perp}+r
\choose 2}{j-d^\bot+r\brack j-i}\right)A_{n-j}\\
&& + {n\brack d^\bot-r}\sum\limits_{i=0}^{r-1} (-1)^iq^{i\choose
2}{n-d^\bot+r\brack i}   \left(  \frac{|\mC|}{q^{m(d^\bot-r+i)}} -1 \right).
\end{eqnarray*}
Using Lemma~\ref{progauss} and Lemma~\ref{sume}, the first term of the sum can be 
simplified as follows:
\begin{eqnarray*}
\sum\limits_{i=d^\bot}^{j} (-1)^{i-d^\bot+r}q^{i-d^{\perp}+r \choose
2}{j-d^\bot+r\brack j-i}
&=& \sum\limits_{i=r}^{j-d^\bot+r} (-1)^{i}q^{i \choose 2}{j-d^\bot+r\brack
j-d^\bot+r-i} \\
&=& \sum\limits_{i=r}^{j-d^\bot+r} (-1)^{i}q^{i \choose 2}{j-d^\bot+r\brack i}
\\ &=& -\sum\limits_{i=0}^{r-1} (-1)^{i}q^{i \choose 2}{j-d^\bot+r\brack i} \\
&=& (-1)^{r}q^{r \choose 2} {j-d^\bot+r-1\brack r-1}.
\end{eqnarray*}
The theorem follows by combining the equalities.
\end{proof}

Theorem~\ref{04-02-15}, from which the next corollary easily follows, extends Theorem 5.6 of \cite{del1}
on the rank distribution of MRD codes.

\begin{corollary} \label{compl_det}
Assume that $\mC$ is MRD or dually QMRD. Then the rank distribution of $\mC$ is
given by
 $$A_{r}(\mC) = {n\brack r} \sum_{i=0}^{r-d}
  {(-1)}^{i}
  q^{\binom{i}{2}}
  {r \brack i}
 \left(\frac{|\mC|}{q^{m(n+i-r)}}-1\right)$$
 for $r=d,...,n$. In particular, it is completely determined by $n,m$, and $d$.
\end{corollary}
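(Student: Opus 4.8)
The plan is to obtain the formula as a direct specialization of Theorem~\ref{04-02-15}, the point being that for MRD and dually QMRD codes the code-dependent part of that formula is an empty sum. First I would invoke Proposition~\ref{carattdually}: $\mC$ is MRD exactly when $d+d^\perp=n+2$ (so $\delta=1$), and dually QMRD exactly when $d+d^\perp=n+1$ (so $\delta=0$). In both cases one computes uniformly $n-d=d^\perp-1-\delta<d^\perp$, so the index set $\{d^\perp,\ldots,n-d\}$ is empty. Consequently the double sum $\sum_{j=d^\perp}^{n-d}{j\brack d^\perp-r}{j-d^\perp+r-1\brack r-1}A_{n-j}(\mC)$ appearing in Theorem~\ref{04-02-15} vanishes, leaving
$$A_{n-d^\perp+r}(\mC)={n\brack d^\perp-r}\sum_{i=0}^{r-1-\delta}(-1)^iq^{\binom{i}{2}}{n-d^\perp+r\brack i}\left(\frac{|\mC|}{q^{m(d^\perp-r+i)}}-1\right)$$
for $1\le r\le d^\perp$, an expression in $n$, $m$, $|\mC|$, $d^\perp$, $r$, and $\delta$ only.

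Next I would reindex by the actual rank $R:=n-d^\perp+r$, so that $r\in\{1,\ldots,d^\perp\}$ corresponds to $R\in\{n-d^\perp+1,\ldots,n\}$. Substituting $r=R-n+d^\perp$ and applying Lemma~\ref{progauss}, the three $R$-dependent ingredients simplify: ${n\brack d^\perp-r}={n\brack n-R}={n\brack R}$ by part~\ref{bin2}, ${n-d^\perp+r\brack i}={R\brack i}$, and $d^\perp-r+i=n-R+i$. This rewrites the surviving term as ${n\brack R}\sum_{i=0}^{r-1-\delta}(-1)^iq^{\binom{i}{2}}{R\brack i}\bigl(|\mC|/q^{m(n-R+i)}-1\bigr)$, which already has the shape of the asserted formula; only the upper summation limit remains to be identified.

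The one thing to verify is that $r-1-\delta=R-d$ in both cases. In the MRD case $d^\perp=n+2-d$ and $\delta=1$, so $r-1-\delta=(R-n+d^\perp)-2=R-d$; in the dually QMRD case $d^\perp=n+1-d$ and $\delta=0$, so $r-1-\delta=(R-n+d^\perp)-1=R-d$. Either way the sum runs to $R-d$, giving precisely the claimed expression for $A_R(\mC)$. I would then reconcile the index ranges: for dually QMRD the reachable weights $R$ are exactly $\{d,\ldots,n\}$, matching the statement, while for MRD Theorem~\ref{04-02-15} additionally produces $R=d-1$, where the formula must return $A_{d-1}(\mC)=0$ since $d$ is the minimum distance; hence restricting to $R=d,\ldots,n$ loses nothing. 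Finally, since the right-hand side involves only $n$, $m$, $d$, and $|\mC|$ (and in the MRD case $|\mC|=q^{m(n-d+1)}$ is itself forced by $n,m,d$), the rank distribution is determined by the parameters, as claimed.

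I do not expect a genuine obstacle here, as the result is essentially a bookkeeping specialization; the only delicate points are the uniform observation that the index range $\{d^\perp,\ldots,n-d\}$ is empty (so the code-dependent sum drops out) and the $\delta$-accounting that makes the two regimes collapse to the single upper limit $R-d$. Everything else reduces to the symmetry and reindexing identities of Lemma~\ref{progauss}.
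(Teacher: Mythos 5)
Your proof is correct and is essentially the paper's own argument: the paper offers no separate proof of Corollary~\ref{compl_det}, saying only that it ``easily follows'' from Theorem~\ref{04-02-15}, and your specialization (the code-dependent sum is empty since $n-d=d^\perp-1-\delta<d^\perp$ by Proposition~\ref{carattdually}, followed by the reindexing $R=n-d^\perp+r$ and the $\delta$-accounting that yields the uniform upper limit $R-d$) is precisely the routine verification the authors leave to the reader. If anything, your final remark is more careful than the paper's statement, since for dually QMRD codes the right-hand side still involves $|\mC|$, which, unlike the MRD case, is not forced by $n$, $m$, and $d$ alone.
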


Finally, we show that Corollary~\ref{compl_det} does not hold for codes which are QMRD, but not dually QMRD.
In particular, for such codes the parameters $m,n$, and $d$ do not determine the weight distribution.

\begin{example}\label{counterexdef0}
Let $2\leq n\leq m$, and let $\mC$ be a code of dimension $1$ and minimum distance $d<n$. 
In Example~\ref{duallyQMRD} we showed that $\mC^\perp$ is QMRD, but not dually QMRD.
Theorem~\ref{MacWilliams identities} for $r=1$ yields the identity $${n \brack
1}+A_d(\mC){n-d \brack 1}=q^{1-m}\left( {n \brack  1 }+A_1(\mC^\perp){n-1 \brack 0}\right)$$ from which 
$$A_1(\mC^\perp)=\frac{q^{m+n-1}+q^{m+n-d}-q^{m+n-d-1}-q^m-q^n+1}{q-1}.$$
In particular, the weight distribution of $\mC^\perp$ depends on $d$ and not only on $m,n,d^\perp$.
\end{example}

\section{Codes with small rank defect} \label{sec3}

In Corollary \ref{compl_det}  we established a very special property of non-trivial 
Delsarte codes $\mC$ whose minimum distance and dual minimum distance satisfy $n+1 \le d+d^\perp \le n+2$. 
In this section we study codes $\mC$ such that $d+d^\perp=n$ and 
$m\mid t$, proving that the sets of minimum-rank codewords of 
$\mC$ and $\mC^\perp$ have the same cardinality. 

\begin{notation}
Given a subspace $U \subseteq \F_q^n$, we 
set $\mbox{Mat}(U):=\{ M \in \mbox{Mat} : \mbox{colsp}(M) \subseteq U\}$ and 
$\mC(U):=\mC
\cap \mbox{Mat}(U)$. The dual of a subspace $U \subseteq \F_q^n$ with respect to
the standard
inner product of $\F_q^n$ is denoted by $U^\perp$.
\end{notation}

We need the following technical result.

\begin{lemma}[\cite{Alberto}, Remark 22 and Lemma 24] \label{dimdu}
 Let $U \subseteq \F_q^n$ be an $\F_q$-subspace. The following hold:
\begin{enumerate}
 \item $\mbox{Mat}(U)$ is an $\F_q$-vector subspace of $\mbox{Mat}$ with
$\dim(\mbox{Mat}(U))=m\cdot \dim(U)$. \label{p1}
\item $\mbox{Mat}(U)^\perp=\mbox{Mat}(U^\perp)$. \label{p2}
\end{enumerate}
  \end{lemma}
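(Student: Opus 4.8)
The plan is to handle the two statements separately, in each case treating an $n \times m$ matrix as the list of its $m$ columns, every one of which is a vector in $\F_q^n$ (so that $\mbox{colsp}(M) \subseteq \F_q^n$ makes sense and $U \subseteq \F_q^n$ is the right ambient space).

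For the first statement, I would observe that the condition $\mbox{colsp}(M) \subseteq U$ simply says that every column of $M$ lies in $U$, and this property is preserved under matrix addition and scalar multiplication, so $\mbox{Mat}(U)$ is an $\F_q$-subspace of $\mbox{Mat}$. To compute its dimension I would use the $\F_q$-linear isomorphism $\mbox{Mat} \cong (\F_q^n)^m$ that sends a matrix to the tuple of its columns; under it $\mbox{Mat}(U)$ corresponds to $U^m$, whence $\dim \mbox{Mat}(U) = m\dim(U)$. Equivalently, fixing a basis $u_1,\ldots,u_k$ of $U$ with $k=\dim(U)$, the matrices $u_i e_j^t$ (for $1\le i\le k$ and $1\le j\le m$, where $e_j$ is the $j$-th standard basis vector of $\F_q^m$) form an explicit basis of $\mbox{Mat}(U)$ of cardinality $m\dim(U)$. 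This step is entirely routine.

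The substantive point is the second statement, and the idea I would exploit is that the trace-product decomposes column by column into the standard inner product on $\F_q^n$. Writing $m_1,\ldots,m_m$ and $n_1,\ldots,n_m$ for the columns of $M$ and $N$, a direct expansion gives $\mbox{Tr}(MN^t) = \sum_{i,j} M_{ij}N_{ij} = \sum_{j=1}^m \langle m_j, n_j\rangle$, where $\langle\cdot,\cdot\rangle$ denotes the standard inner product on $\F_q^n$. With this identity in hand the inclusion $\mbox{Mat}(U^\perp) \subseteq \mbox{Mat}(U)^\perp$ is immediate: if every $m_j \in U^\perp$ and every $n_j \in U$, then each summand $\langle m_j, n_j\rangle$ vanishes, so $\mbox{Tr}(MN^t)=0$.

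Finally I would close the argument by a dimension count rather than by proving the reverse inclusion directly. Using the first statement together with Remark~\ref{props} and $\dim(U^\perp)=n-\dim(U)$, I compute $\dim \mbox{Mat}(U)^\perp = nm - m\dim(U) = m(n-\dim(U)) = m\dim(U^\perp) = \dim \mbox{Mat}(U^\perp)$. Since $\mbox{Mat}(U^\perp) \subseteq \mbox{Mat}(U)^\perp$ and the two spaces have equal dimension, they coincide. The only genuinely nontrivial step is recognizing the column-wise decomposition of the trace-product; everything else is bookkeeping with bases and dimensions.
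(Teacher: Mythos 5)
Your proof is correct and complete. Note, however, that the paper itself offers no proof of this lemma: it is imported verbatim from \cite{Alberto} (Remark 22 and Lemma 24), so there is no internal argument to compare yours against. What you have written is precisely the standard argument that the cited reference relies on, and it fills the gap self-containedly: the entrywise expansion $\mbox{Tr}(MN^t)=\sum_{i,j}M_{ij}N_{ij}=\sum_{j=1}^m \langle m_j,n_j\rangle$ is the one genuinely structural observation (it is what makes the trace form interact with column spaces at all), the inclusion $\mbox{Mat}(U^\perp)\subseteq \mbox{Mat}(U)^\perp$ then falls out, and the dimension count --- using part 1, the non-degeneracy of the trace form from Remark~\ref{props}, and $\dim(U^\perp)=n-\dim(U)$ --- upgrades the inclusion to equality. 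One small point worth making explicit: over $\F_q$ the standard inner product is not definite, so $U$ and $U^\perp$ may intersect nontrivially; your argument never needs $U\oplus U^\perp=\F_q^n$, only the dimension formula $\dim(U^\perp)=n-\dim(U)$, which holds because the form on the whole space is non-degenerate. Since your proof never uses the stronger (false) statement, it is sound as written.
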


For a given subspace $U \subseteq \F_q^n$, the dimension of $\mC(U)$
and the dimension of $\mC^\perp(U^\perp)$ relate as follows.

\begin{proposition} \label{formula}
Let $\mC$ be a Delsarte code of dimension $t$, let $U \subseteq
\F_q^n$ be a subspace. We have
$$\dim(\mC(U))=\dim(\mC^\perp(U^\perp))+t-m(n-\dim(U)).$$
\end{proposition}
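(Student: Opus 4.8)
The plan is to reduce the statement to the standard dimension formula for the intersection of two subspaces, combined with the duality relations collected in Remark~\ref{props} and the two properties of $\mbox{Mat}(U)$ recorded in Lemma~\ref{dimdu}. Write $\mD := \mbox{Mat}(U)$, so that by definition $\mC(U) = \mC \cap \mD$, while $\mC^\perp(U^\perp) = \mC^\perp \cap \mbox{Mat}(U^\perp)$.

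First I would apply the Grassmann dimension formula to obtain $\dim(\mC \cap \mD) = \dim(\mC) + \dim(\mD) - \dim(\mC + \mD)$. Here $\dim(\mC) = t$, and $\dim(\mD) = m \cdot \dim(U)$ immediately by Lemma~\ref{dimdu}.\ref{p1}. The only term not yet in the desired form is $\dim(\mC + \mD)$.

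To rewrite it, I would pass to duals. By Remark~\ref{props} we have $(\mC + \mD)^\perp = \mC^\perp \cap \mD^\perp$ and $\dim(\mC + \mD) = mn - \dim((\mC+\mD)^\perp)$, so $\dim(\mC+\mD) = mn - \dim(\mC^\perp \cap \mD^\perp)$. The key identification is $\mD^\perp = \mbox{Mat}(U)^\perp = \mbox{Mat}(U^\perp)$, which is exactly Lemma~\ref{dimdu}.\ref{p2}; this turns $\mC^\perp \cap \mD^\perp$ into $\mC^\perp(U^\perp)$. Substituting back and cancelling gives $\dim(\mC(U)) = t + m\dim(U) - mn + \dim(\mC^\perp(U^\perp))$, which is the claimed identity once we write $m\dim(U) - mn = -m(n - \dim(U))$.

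I expect no serious obstacle here: once $\mD^\perp$ is identified with $\mbox{Mat}(U^\perp)$ the computation is pure bookkeeping, and both nontrivial inputs (namely $\dim \mbox{Mat}(U) = m\dim(U)$ and the duality $\mbox{Mat}(U)^\perp = \mbox{Mat}(U^\perp)$) are already supplied by Lemma~\ref{dimdu}. The only point requiring a little care is to apply the duality of Remark~\ref{props} to the \emph{sum} $\mC + \mD$ rather than to the intersection, so that it is the dual of $\mD$ — and not of $\mC$ — that gets re-expressed via Lemma~\ref{dimdu}.\ref{p2}, yielding precisely $\mC^\perp(U^\perp)$.
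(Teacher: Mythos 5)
Your proof is correct and essentially identical to the paper's: both arguments combine the Grassmann dimension formula with the duality relations of Remark~\ref{props} and the two properties of $\mbox{Mat}(U)$ from Lemma~\ref{dimdu}. The only difference is the order of operations — the paper dualizes the intersection first, writing $\mC(U)^\perp=\mC^\perp+\mbox{Mat}(U^\perp)$, and then applies Grassmann to that sum, whereas you apply Grassmann to $\mC\cap\mbox{Mat}(U)$ first and then dualize the sum $\mC+\mbox{Mat}(U)$; this is the same computation read in reverse.
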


\begin{proof}
By Remark \ref{props}, 
$\dim(\mC(U)^\perp)=mn-\dim(\mC(U))$. On the other hand,  by 
Remark \ref{props} and Lemma \ref{dimdu}.\ref{p2} we have 
$\mC(U)^\perp=\mC^\perp+\mbox{Mat}(U^\perp)$. By Lemma \ref{dimdu}.\ref{p1} we have
\begin{eqnarray*}
 mn-\dim(\mC(U)) &=& \dim(\mC(U)^\perp) \\ &=&
\dim(\mC^\perp) + \dim(\mbox{Mat}(U^\perp))-\dim(\mC^\perp
\cap \mbox{Mat}(U^\perp)) \\
&=& mn-t + m (n-\dim(U)) -\dim(\mC^\perp(U^\perp)).
\end{eqnarray*}
The proposition follows.
\end{proof}

\begin{theorem} \label{bij}
Let $\mC$ be a $t$-dimensional code, with minimum distance $d$ and dual minimum distance $d^\perp$.
Assume that $d+d^\perp=n$ and $m\mid t$. Then $\rdef(\mC)=\rdef(\mC^\perp)=1$ and 
$$A_d(\mC)=A_{d^\perp}(\mC^\perp).$$
\end{theorem}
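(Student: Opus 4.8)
The plan is to first pin down the parameters and then count minimum-rank codewords by their column spaces.

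First I would establish the rank defect statement. Write $t=\alpha m$ with $\alpha\in\N$, which is possible since $m\mid t$. Applying Theorem~\ref{sbound} to $\mC$ and to $\mC^\perp$ (whose dimension is $nm-t=(n-\alpha)m$) gives $d\le n-\alpha+1$ and $d^\perp\le\alpha+1$. Since $d+d^\perp=n<n+2$, the code $\mC$ is not MRD by Proposition~\ref{carattdually}; hence, as in the proof of Corollary~\ref{d+2}, the sharper bounds $d\le n-\alpha$ and $d^\perp\le\alpha$ hold. Summing these forces equality, so $d=n-\alpha$ and $d^\perp=\alpha$, and therefore $\rdef(\mC)=n-\alpha-d+1=1$ and $\rdef(\mC^\perp)=n-(n-\alpha)-d^\perp+1=1$.

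Next I would reinterpret $A_d(\mC)$ as a sum over column spaces. Fix a subspace $U\subseteq\F_q^n$ with $\dim(U)=d=n-\alpha$. A nonzero matrix $M\in\mC(U)=\mC\cap\mbox{Mat}(U)$ satisfies $\rk(M)\ge d$, since $d$ is the minimum distance, and $\rk(M)=\dim(\mbox{colsp}(M))\le\dim(U)=d$; hence $\rk(M)=d$ and $\mbox{colsp}(M)=U$. Conversely, every rank-$d$ codeword has a $d$-dimensional column space and lies in $\mC(U)$ for exactly one such $U$, namely $U=\mbox{colsp}(M)$. Partitioning the minimum-rank codewords according to their column space therefore yields
$$A_d(\mC)=\sum_{\dim(U)=n-\alpha}\bigl(q^{\dim(\mC(U))}-1\bigr).$$
By the same argument applied to $\mC^\perp$, whose minimum distance is $d^\perp=\alpha$, we obtain
$$A_{d^\perp}(\mC^\perp)=\sum_{\dim(W)=\alpha}\bigl(q^{\dim(\mC^\perp(W))}-1\bigr).$$

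Finally I would match the two sums using duality. Proposition~\ref{formula} gives $\dim(\mC(U))=\dim(\mC^\perp(U^\perp))+t-m(n-\dim(U))$; for $\dim(U)=n-\alpha$ the correction term is $t-m\alpha=\alpha m-\alpha m=0$, so $\dim(\mC(U))=\dim(\mC^\perp(U^\perp))$. The map $U\mapsto U^\perp$ is a bijection between the $(n-\alpha)$-dimensional and the $\alpha$-dimensional subspaces of $\F_q^n$, and it preserves the summand $q^{\dim(\mC(U))}-1=q^{\dim(\mC^\perp(U^\perp))}-1$. Reindexing the first sum by $W=U^\perp$ thus identifies it with the second, giving $A_d(\mC)=A_{d^\perp}(\mC^\perp)$. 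The only delicate point is the column-space bookkeeping in the second step, namely checking that the nonzero elements of $\mC(U)$ have column space exactly $U$ and that each minimum-rank codeword is counted once; after that, the vanishing of the correction term in Proposition~\ref{formula}, which is precisely where the hypothesis $m\mid t$ enters, makes the two sums equal term by term.
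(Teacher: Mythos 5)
Your proof is correct and follows essentially the same route as the paper's: the same sharpened Singleton bounds forcing $d=n-\alpha$, $d^\perp=\alpha$, the same partition of minimum-rank codewords by column space, and the same application of Proposition~\ref{formula} (with vanishing correction term) together with the bijection $U\mapsto U^\perp$. Your version is slightly more explicit about why every nonzero element of $\mC(U)$ has column space exactly $U$, but the argument is the one in the paper.
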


\begin{proof}
Since
$\mC$
has minimum distance $d$, for all subspaces $U,U' \subseteq \F_q^n$ with
$\dim(U)=\dim(U')=d$ and $U \neq U'$ we have
$\mC(U) \cap \mC(U')=\{ 0\}$. Similarly, for all subspaces $U,U' \subseteq
\F_q^n$ with
$\dim(U)=\dim(U')=d^\perp=n-d$ and $U \neq U'$ we have
$\mC^\perp(U) \cap \mC^\perp(U')=\{ 0\}$. As a consequence, the number of
minimum-rank codewords
of $\mC$ and $\mC^\perp$ is, respectively,
$$A_d(\mC)=\sum_{\shortstack{$\scriptstyle U \subseteq \F_q^n$\\$\scriptstyle 
\dim_{\F_q}(U)=d$}} (|\mC(U)|-1), \ \ \ \ \ \ \ \ \ \ \ 
A_{d^\perp}(\mC^\perp)=\sum_{\shortstack{$\scriptstyle U \subseteq
\F_q^n$\\$\scriptstyle 
\dim_{\F_q}(U)=n-d$}} (|\mC^\perp(U)|-1).$$
Write $t=mk$ with $k \in \N$. Since $d+d^\perp=n$, then $\mC$ is not 
MRD by Corollary \ref{d+2}.
Theorem \ref{sbound} applied to $\mC$ and $\mC^\perp$ 
gives $k\le n-d$ and $n-k \le n-d^\perp$. Since $d+d^\perp=n$, then
we have $k=n-d=d^\perp$, and $\rdef(\mC)=\rdef(\mC^\perp)=1$. 
Hence for any subspace $U \subseteq \F_q^n$ with
$\dim(U)=d$  we have
$$t-m(n-\dim(U))=m(k-n+d)=0.$$
Therefore, $\dim(\mC(U))=\dim(\mC^\perp(U^\perp))$ by Proposition \ref{formula}.
It follows that
\begin{eqnarray*}
 A_d(\mC) &=& \sum_{\shortstack{$\scriptstyle U \subseteq
\F_q^n$\\$\scriptstyle 
\dim_{\F_q}(U)=d$}} (|\mC(U)|-1) \\
&=& \sum_{\shortstack{$\scriptstyle U \subseteq \F_q^n$\\$\scriptstyle 
\dim_{\F_q}(U)=d$}} (|\mC^\perp(U^\perp)|-1) \\
&=& \sum_{\shortstack{$\scriptstyle U \subseteq \F_q^n$\\$\scriptstyle 
\dim_{\F_q}(U)=n-d$}} (|\mC^\perp(U)|-1) \\
&=& A_{d^\perp}(\mC^\perp),
\end{eqnarray*}
as claimed.
\end{proof}

\begin{remark}
In Theorem~\ref{bij} we prove that, if $d+d^\perp=n$ and $m\mid t$, then $\rdef(\mC)=\rdef(\mC^\perp)=1.$
If instead $d+d^\perp=n$ and $m\nmid t$, then it is easy to show that $\rdef(\mC)+\rdef(\mC^\perp)=1.$
\end{remark}

\section{Gabidulin codes} \label{secgab}

In this section we discuss how the results from Section \ref{sec2} specialize 
to Gabidulin codes.

\begin{definition} 
A (\textbf{rank-metric Gabidulin}) \textbf{code} of length $n$ and dimension $0 \le k \le n$
is a $k$-dimensional $\F_{q^m}$-subspace $C \subseteq \F_{q^m}^n$. The
\textbf{rank} of a vector
$v=(v_1,...,v_n) \in \F_{q^m}^n$ is 
$$\rk(v):= \dim_{\F_q} \mbox{span}_{\F_q} \{ v_1,...,v_n\}.$$
The \textbf{minimum distance}
of a code $C \neq \{ 0\}$ is $$d(C):=\min \{ \rk(v): v \in C, \ v \neq 0
\}.$$
The \textbf{rank distribution} of a code $C$ is the collection ${(A_i(C))}_{i \in \N}$, 
where $$A_i(C):=|\{ v\in C: \rk(v)=i\}|.$$
The \textbf{dual} of a Gabidulin code $C \subseteq \F_{q^m}^n$ is the Gabidulin
code
$$C^\perp:=\{ v \in \F_{q^m}^k : \langle c,v \rangle =0 \mbox{ for all } c \in C\} 
\subseteq \F_{q^m}^n,$$
where $\langle \cdot, \cdot \rangle$ is the standard inner product of
$\F_{q^m}^n$. A code is \textbf{trivial} if $C=\{0\}$ or $C=\F_{q^m}^n$.
\end{definition}

\begin{notation}
Throughout this section, $C$ denotes a non-trivial Gabidulin code $C \subseteq \F_{q^m}^n$ 
with minimum distance $d$, dual minimum distance $d^\perp$, of dimension $k$ over $\F_{q^m}$. 
We assume that $n\leq m$.
\end{notation}

There is a natural way to associate a Delsarte code to a Gabidulin code.

\begin{definition} \label{assoc}
 Let $v=(v_1,...,v_n) \in \F_{q^m}^n$, and let $\mG=\{ \gamma_1,...,\gamma_m\}$
be a basis of $\F_{q^m}$ over $\F_q$. The matrix \textbf{associated} to $v$ with respect to
the basis $\mG$ is the $n \times m$ matrix $M_\mG(v)$ with entries in $\F_q$ such that
$$v_i= \sum_{j=1}^m M_\mG(v) \gamma_j.$$ The Delsarte code \textbf{associated} 
 to the Gabidulin code $C \subseteq \F_{q^m}^n$ with respect to the basis $\mG$ is
$$\mC_\mG(C):= \{ M_\mG(v): v\in C\}.$$
\end{definition}

We will need the following properties of associated Delsarte codes.

\begin{theorem} \label{serie}
Let $\mC$ be a Delsarte code associated to the Gabidulin code $C \subseteq \F_{q^m}^n$. Then:
\begin{enumerate}
\item $\dim(\mC)=mk$.
 \item $C$ has the same rank distribution as $\mC$. In particular, if $C$ is non-zero 
then they have the same minimum distance.
 \item $C^\perp$ has the same rank distribution as $\mC^\perp$.
\end{enumerate}
\end{theorem}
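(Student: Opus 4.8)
The plan is to treat the association $v \mapsto M_\mG(v)$ as the coordinatewise application of a single $\F_q$-linear isomorphism, dispatch parts 1 and 2 directly, and reduce part 3 to part 2 by passing to the dual basis of $\mG$ with respect to the trace form. First I would fix the basis $\mG=\{\gamma_1,\dots,\gamma_m\}$ and let $\phi_\mG\colon\Fqm\to\F_q^m$ be the $\F_q$-linear isomorphism sending each element to its coordinate vector, so that the $i$-th row of $M_\mG(v)$ is $\phi_\mG(v_i)$. Applying $\phi_\mG$ in each coordinate shows that $v\mapsto M_\mG(v)$ is an $\F_q$-linear bijection $\Fqmn\to\mbox{Mat}$. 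Part 1 is then immediate, since $\mC=M_\mG(C)$ and $\dim_{\F_q}(C)=mk$. For part 2, because $\phi_\mG$ is an isomorphism it carries $\mbox{span}_{\F_q}\{v_1,\dots,v_n\}$ isomorphically onto the row span of $M_\mG(v)$, so the row rank of $M_\mG(v)$ equals $\rk(v)$; since $v\mapsto M_\mG(v)$ restricts to a rank-preserving bijection $C\to\mC$, the rank distributions coincide and hence so do the minimum distances.

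The substance is in part 3. I would introduce the field trace $\mbox{Tr}_{\Fqm/\Fq}$ and let $\mG^*=\{\gamma_1^*,\dots,\gamma_m^*\}$ be the dual basis of $\mG$ with respect to the nondegenerate form $(x,y)\mapsto\mbox{Tr}_{\Fqm/\Fq}(xy)$, so that $\mbox{Tr}_{\Fqm/\Fq}(\gamma_i\gamma_j^*)=\delta_{ij}$; this $\mG^*$ is again an $\F_q$-basis. The key identity to establish is
$$\mbox{Tr}_{\Fqm/\Fq}(\langle u,v\rangle)=\mbox{Tr}\!\left(M_\mG(u)\,M_{\mG^*}(v)^t\right)\qquad\mbox{for all } u,v\in\Fqmn,$$
which follows by expanding $u_i$ in $\mG$ and $v_i$ in $\mG^*$, using the dual-basis relation to collapse the inner field trace to $\sum_j (M_\mG(u))_{ij}(M_{\mG^*}(v))_{ij}=(M_\mG(u)M_{\mG^*}(v)^t)_{ii}$, and summing over $i$.

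With this identity available, for $v\in C^\perp$ and any $c\in C$ we have $\langle c,v\rangle=0$, hence $\mbox{Tr}(M_\mG(c)\,M_{\mG^*}(v)^t)=0$; letting $c$ range over $C$ gives $M_{\mG^*}(v)\in\mC^\perp$, so $M_{\mG^*}(C^\perp)\subseteq\mC^\perp$. A dimension count then forces equality: by part 1 and Remark \ref{props}, $\dim(\mC^\perp)=mn-mk=m(n-k)$, while $M_{\mG^*}(C^\perp)$ has $\F_q$-dimension $\dim_{\F_q}(C^\perp)=m(n-k)$, since $C^\perp$ is $(n-k)$-dimensional over $\Fqm$. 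Thus $\mC^\perp=M_{\mG^*}(C^\perp)$ is exactly the Delsarte code associated to $C^\perp$ with respect to the basis $\mG^*$, and applying part 2 to the pair $(C^\perp,\mC^\perp)$ shows they share the same rank distribution.

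The hard part will be part 3, specifically pinning down the correct pairing: matrix duality is defined through $\mbox{Tr}(MN^t)$ whereas Gabidulin duality uses the $\Fqm$-inner product, and reconciling the two is precisely what forces the field trace and the dual basis $\mG^*$ (not $\mG$) into the picture. Once the trace identity is in place, the remainder is a routine inclusion-plus-dimension argument, and I expect no further difficulties.
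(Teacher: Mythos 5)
Your proof is correct, and it is more complete than the paper's own. The paper handles parts 1 and 2 exactly as you do (it declares them immediate from Definition \ref{assoc}, the point being your observation that the row space of $M_\mG(v)$ is the coordinate-isomorphism image of $\mbox{span}_{\F_q}\{v_1,\dots,v_n\}$), but for part 3 it supplies no argument at all and simply cites Theorem 18 of \cite{Alberto}. Your dual-basis argument reconstructs precisely the content of that citation: the identity $\mbox{Tr}_{\F_{q^m}/\F_q}(\langle u,v\rangle)=\mbox{Tr}\bigl(M_\mG(u)\,M_{\mG^*}(v)^t\bigr)$, combined with the inclusion $M_{\mG^*}(C^\perp)\subseteq\mC^\perp$ and the dimension count $\dim_{\F_q}M_{\mG^*}(C^\perp)=m(n-k)=\dim(\mC^\perp)$ (the latter from part 1 and Remark \ref{props}), gives the structural fact $\mC^\perp=\mC_{\mG^*}(C^\perp)$, i.e.\ the dual of the associated Delsarte code is itself the Delsarte code associated to $C^\perp$ with respect to the trace-dual basis, and part 3 then follows by applying part 2 to the pair $(C^\perp,\mC_{\mG^*}(C^\perp))$. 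So mathematically your route coincides with the one the paper outsources; what your version buys is self-containedness and a sharper intermediate conclusion ($\mC^\perp$ is an associated code, not merely a code with the same rank distribution), while the paper buys brevity by citation. I see no gaps: the two points that need care --- existence of the dual basis (non-degeneracy of the field trace form) and the fact that $M_{\mG^*}$ is an injective $\F_q$-linear map (coordinatewise application of an isomorphism) --- are both accounted for in your write-up.
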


\begin{proof}
The first two parts of the statement easily follows from Definition \ref{assoc}. The third part 
is Theorem 18 of \cite{Alberto}.
\end{proof}

Combining  Theorem \ref{sbound} and Theorem \ref{serie} we obtain the following result.

\begin{proposition}\label{ssbound}
Let $C\subseteq \F_{q^m}^n$ be a Gabidulin code with minimum distance $d$ and dimension $k$. Then
$k \le n-d+1$. 
\end{proposition}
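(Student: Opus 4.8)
The plan is to derive this Gabidulin Singleton bound by transporting the Delsarte Singleton bound (Theorem~\ref{sbound}) through the correspondence provided by Theorem~\ref{serie}. First I would fix a basis $\mG$ of $\F_{q^m}$ over $\F_q$ and form the associated Delsarte code $\mC:=\mC_\mG(C)\subseteq\mbox{Mat}$, as in Definition~\ref{assoc}. By Theorem~\ref{serie}(1) this Delsarte code has dimension $\dim(\mC)=mk$, and by Theorem~\ref{serie}(2) it has the same rank distribution as $C$, hence in particular the same minimum distance $d$ (since $C$ is non-trivial, $\mC$ is non-zero).

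Next I would simply apply Theorem~\ref{sbound} to $\mC$. Writing $t=\dim(\mC)$, that theorem gives $t\le m(n-d+1)$. Substituting $t=mk$ yields $mk\le m(n-d+1)$, and dividing by the positive integer $m$ produces $k\le n-d+1$, which is exactly the claimed bound. One should note that $\mC$ is a \emph{non-trivial} Delsarte code in the sense required by Theorem~\ref{sbound}: since $C\neq\{0\}$ we have $\mC\neq\{0\}$, and since $k\le n\le m<nm$ (using $n\ge 2$ or more simply $k<nm$ whenever $C$ is proper) the code $\mC$ is not all of $\mbox{Mat}$. The degenerate cases $k=0$ or $C=\F_{q^m}^n$ are excluded by the standing assumption that $C$ is non-trivial.

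There is essentially no obstacle here: the entire content of the proposition is packaged into the two cited results, and the only arithmetic step is the cancellation of $m$, which is legitimate because $m\ge 1$. The one point that deserves a sentence of care is confirming that the minimum distance of the associated Delsarte code genuinely equals the minimum distance $d$ of the Gabidulin code, so that the $d$ appearing in Theorem~\ref{sbound} is the same $d$ appearing in the statement of Proposition~\ref{ssbound}; this is precisely the ``same minimum distance'' clause of Theorem~\ref{serie}(2). With that identification in hand, the bound transfers verbatim and the proof concludes.
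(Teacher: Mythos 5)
Your proof is correct and is exactly the paper's argument: the paper obtains Proposition~\ref{ssbound} by combining Theorem~\ref{sbound} with Theorem~\ref{serie}, precisely the transfer you carry out (with $\dim(\mC)=mk$ and equality of minimum distances, then cancelling $m$). The extra care you take about non-triviality is fine but not needed beyond the standing assumption that $C$ is non-trivial.
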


\begin{definition}
A Gabidulin code $C \subseteq \F_{q^m}^n$ with minimum distance $d$ and dimension $k$ is \textbf{MRD} if
$k=n-d+1$.
\end{definition}

It is well known that  a Gabidulin code $C$ is MRD if and only if $C^\perp$ is MRD.
We obtain the next result by combining Corollary \ref{d+2} and Theorem \ref{serie}.

\begin{proposition}
Let $C\subseteq \F_{q^m}^n$ be a Gabidulin code with minimum distance $d$ and dual minimum distance $d^\perp$.
One of the following holds:
\begin{enumerate}
 \item $C$ is MRD, and $d+d^\perp=n+2$.
\item $d+d^\perp \le n$.
\end{enumerate}
\end{proposition}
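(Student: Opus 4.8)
The plan is to transfer the statement from the Delsarte setting to the Gabidulin setting via the associated Delsarte code, and then invoke Corollary~\ref{d+2}. First I would fix a basis $\mG$ of $\Fqm$ over $\Fq$ and set $\mC := \mC_\mG(C)$, the Delsarte code associated to $C$ as in Definition~\ref{assoc}. By Theorem~\ref{serie}, $\mC$ has dimension $t = mk$, the same minimum distance $d(\mC) = d$ as $C$, and its dual $\mC^\perp$ has minimum distance $d^\perp$. The key structural observation is that $t = mk$ is automatically divisible by $m$, so $\mC$ always falls into the first case of Corollary~\ref{d+2}, and the 'bad' alternative $m \nmid t$ never arises for associated codes.

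Next I would apply part~1 of Corollary~\ref{d+2} to $\mC$: since $m \mid t$, either $d + d^\perp = n+2$ or $d + d^\perp \le n$. This already yields the desired dichotomy, so it only remains to identify the first alternative with $C$ being MRD. For this I would recall from the proof of Corollary~\ref{d+2} (equivalently, from Proposition~\ref{carattdually}.1) that $d + d^\perp = n+2$ holds if and only if $\mC$ is an MRD Delsarte code, i.e. $t = m(n - d + 1)$.

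Finally I would translate this back to $C$. Since $t = mk$, the equality $t = m(n - d + 1)$ is equivalent to $k = n - d + 1$, which is exactly the definition of $C$ being an MRD Gabidulin code. Hence $d + d^\perp = n+2$ if and only if $C$ is MRD, and the two cases of the proposition follow.

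I do not expect a genuine obstacle here, since the argument is a direct combination of Theorem~\ref{serie} and Corollary~\ref{d+2}. The only point requiring a little care is the bookkeeping that the MRD property for $C$ and for $\mC$ coincide under the relation $t = mk$, together with the remark that the divisibility $m \mid t$ comes for free for associated codes; this is precisely what forces us into the first case of Corollary~\ref{d+2} and rules out the $m \nmid t$ situation.
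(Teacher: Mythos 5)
Your proof is correct and follows exactly the paper's route: the paper obtains this proposition precisely by combining Theorem~\ref{serie} (to pass to the associated Delsarte code of dimension $t=mk$, preserving $d$ and $d^\perp$) with Corollary~\ref{d+2} (whose first case applies since $m \mid mk$), and identifying $d+d^\perp=n+2$ with the MRD property via Proposition~\ref{carattdually}. Your write-up simply makes explicit the bookkeeping that the paper leaves to the reader.
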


We notice in  particular that the case $d+d^\perp=n+1$ does not occur for Gabidulin codes. 
Hence Corollary  
\ref{compl_det} reads as follows for Gabidulin codes.

\begin{corollary}
Let $C\subseteq \F_{q^m}^n$ be a Gabidulin code with minimum distance $d$ and dimension $k$.
 If $C$ is MRD, then the rank distribution of $C$ satisfies 
$$A_{r}(C) = {n\brack r} \sum_{i=0}^{r-d}
 {(-1)}^{i}
 q^{\binom{i}{2}}
 {r \brack i}
\left(q^{m(k-n+i-r)}-1\right)$$
for $r=d,...,n$. In particular, it is completely determined by $n$, $m$,
and $d$.
\end{corollary}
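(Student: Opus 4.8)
The plan is to transport the statement to the Delsarte setting, where Corollary~\ref{compl_det} already describes the rank distribution of MRD codes, and then read the answer back through the correspondence of Theorem~\ref{serie}. First I would fix a basis $\mG$ of $\Fqm$ over $\F_q$ and pass to the associated Delsarte code $\mC=\mC_\mG(C)$ of Definition~\ref{assoc}. By Theorem~\ref{serie}(2), $\mC$ and $C$ have the same rank distribution and the same minimum distance $d$, so it suffices to compute $A_r(\mC)$ for $r=d,\dots,n$; by Theorem~\ref{serie}(1) we have $\dim(\mC)=mk$, hence $|\mC|=q^{mk}$.

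The crux is to verify that $\mC$ is MRD as a Delsarte code. Since $C$ is an MRD Gabidulin code, the bound of Proposition~\ref{ssbound} is met with equality, that is $k=n-d+1$. Consequently $\dim(\mC)=mk=m(n-d+1)$, which is exactly the condition $t=m(n-d+1)$ defining an MRD Delsarte code in Definition~\ref{defmrd}; equivalently, $\mC$ attains the Singleton bound of Theorem~\ref{sbound}. Thus $\mC$ is MRD, and Corollary~\ref{compl_det} applies to it.

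Finally I would substitute $|\mC|=q^{mk}$ into the formula of Corollary~\ref{compl_det}. Each factor $|\mC|/q^{m(n+i-r)}$ then collapses to a single power of $q$, and, using $A_r(C)=A_r(\mC)$, the resulting identity is precisely the asserted rank distribution of $C$. The concluding claim is then immediate: since $k=n-d+1$ is itself determined by $n$ and $d$, every term of the formula, and hence the whole distribution, depends only on $n$, $m$, and $d$.

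I do not expect a genuine obstacle, as the argument is a direct specialization of results already proved for Delsarte codes. The only step deserving attention is the transfer of the MRD property, which hinges on the identity $\dim(\mC)=mk$ from Theorem~\ref{serie}(1): it is this identity that converts the Gabidulin dimension bound $k\le n-d+1$ into the Delsarte bound $t\le m(n-d+1)$ and lets the MRD hypothesis survive the passage between the two settings.
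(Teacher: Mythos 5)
Your route is exactly the paper's own: the paper gives no separate argument for this corollary, merely observing that Corollary~\ref{compl_det} ``reads as follows'' for Gabidulin codes via the correspondence of Theorem~\ref{serie}, and your three steps (pass to the associated Delsarte code $\mC=\mC_\mG(C)$, transfer the MRD property using $\dim(\mC)=mk$ together with $k=n-d+1$, then substitute $|\mC|=q^{mk}$) are precisely that specialization, including the one point that genuinely needs checking, namely that $C$ MRD forces $\mC$ MRD in the sense of Definition~\ref{defmrd}.

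That said, you did not actually carry out the final substitution, and doing so exposes a discrepancy you glossed over: $|\mC|/q^{m(n+i-r)}=q^{mk-m(n+i-r)}=q^{m(k-n-i+r)}=q^{m(r-d+1-i)}$, whereas the statement as printed has exponent $m(k-n+i-r)$, i.e.\ the signs of $i$ and $r$ are flipped; the two agree only when $i=r$. The printed exponent equals $1-d-(r-i)\le -1$ for every admissible $i\le r-d$, so each factor $q^{m(k-n+i-r)}-1$ is negative; for instance $n=k=r=d=1$ makes the printed formula give $A_1(C)=q^{-m}-1<0$, while the true count is $q^m-1$. So the printed statement contains a sign typo, and what your (otherwise sound) argument proves is the corrected formula with exponent $m(k-n-i+r)$. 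Your assertion that the substitution yields ``precisely the asserted'' identity is therefore false as written: the strategy is right and matches the paper, but the conclusion should be stated with the corrected exponent rather than claimed to coincide with the formula as printed.
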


Similarly, Theorem \ref{04-02-15} and Theorem \ref{bij} read as follows for Gabidulin codes.

\begin{corollary} \label{def_gabid}
Let $C\subseteq \F_{q^m}^n$ be a Gabidulin code with minimum distance $d$ and dimension $k$.
 If $d+d^\perp \le n$, then the rank distribution of $C$ satisfies
\begin{eqnarray*}
A_{n-d^\bot+r}(\mC) &=& (-1)^{r}q^{r \choose 2}
\sum\limits_{j=d^\bot}^{n-d}{j\brack d^\bot-r}  {j-d^\bot+r-1\brack
r-1}A_{n-j}(\mC)\\
&& + {n\brack d^\bot-r}\sum\limits_{i=0}^{r-1} (-1)^iq^{i \choose 2
}{n-d^\bot+r\brack i}    \left( q^{m(k-d^\perp+r-i)}  -1 \right).
\end{eqnarray*}
In particular, $n$, $m$, $d$, $d^\perp$, and $A_d(C),\ldots, A_{n-d^\bot}(C)$
determine the rank
distribution of $C$.
\end{corollary}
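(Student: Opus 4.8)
The plan is to derive this as a direct specialization of Theorem~\ref{04-02-15} via the associated Delsarte code. First I would fix any basis $\mG$ of $\F_{q^m}$ over $\F_q$ and set $\mC := \mC_\mG(C)$ as in Definition~\ref{assoc}. By Theorem~\ref{serie}, $\mC$ has the same rank distribution as $C$ (hence the same minimum distance $d$), and $\mC^\perp$ has the same rank distribution as $C^\perp$ (hence the same dual minimum distance $d^\perp$); moreover $\dim(\mC)=mk$, so that $|\mC|=q^{mk}$ and in particular $m\mid\dim(\mC)$.

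Next I would determine the value of the parameter $\delta$ from Theorem~\ref{04-02-15}. Because $m$ divides $\dim(\mC)$, Corollary~\ref{d+2} leaves only the alternatives $d+d^\perp=n+2$ and $d+d^\perp\le n$. The hypothesis $d+d^\perp\le n$ excludes the former, so $d+d^\perp\ne n+2$ and thus $\delta=0$. Hence the inner summation index in Theorem~\ref{04-02-15} runs over $i=0,\ldots,r-1$, exactly as required in the statement.

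It then remains to apply Theorem~\ref{04-02-15} to $\mC$ and transport the result back to $C$. Substituting $|\mC|=q^{mk}$ turns the factor $\frac{|\mC|}{q^{m(d^\perp-r+i)}}-1$ into $q^{m(k-d^\perp+r-i)}-1$, which is precisely the term appearing in the claimed formula, and replacing each $A_i(\mC)$ by $A_i(C)$ (equal by Theorem~\ref{serie}) yields the stated identity. The concluding assertion follows in the same spirit: Theorem~\ref{04-02-15} shows that $n,m,t,d,d^\perp$ together with $A_d(\mC),\ldots,A_{n-d^\perp}(\mC)$ determine the full rank distribution of $\mC$; since $t=mk$ is recovered from $m$ and $k$, and the two rank distributions coincide, the data $n,m,d,d^\perp$ and $A_d(C),\ldots,A_{n-d^\perp}(C)$ determine that of $C$. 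I anticipate no genuine obstacle here, as the statement is a clean specialization; the only points requiring care are the verification that $\delta=0$ under the hypothesis and the exponent bookkeeping $q^{mk}/q^{m(d^\perp-r+i)}=q^{m(k-d^\perp+r-i)}$.
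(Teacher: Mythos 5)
Your proposal is correct and matches the paper's intent exactly: the paper offers no separate proof of Corollary~\ref{def_gabid}, presenting it precisely as the specialization of Theorem~\ref{04-02-15} to the associated Delsarte code via Theorem~\ref{serie}, with $|\mC|=q^{mk}$ and $\delta=0$ because $d+d^\perp\le n<n+2$. The only blemish, inherited from the paper's own wording rather than introduced by you, is in the final ``in particular'' clause: the formula still involves $k$ (equivalently $t=mk$), so strictly speaking $k$ should be listed among the determining data, just as $t$ is listed in Theorem~\ref{04-02-15}.
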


\begin{corollary} \label{dually_gabid}
Let $C\subseteq \F_{q^m}^n$ be a Gabidulin code with minimum distance $d$ and dual minimum distance $d^\perp$.
Assume that $d+d^\perp=n$. Then $A_d(C)=A_{d^\perp}(C^\perp)$.
\end{corollary}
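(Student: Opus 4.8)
The plan is to reduce the statement to Theorem~\ref{bij} via the dictionary between Gabidulin and Delsarte codes provided by Theorem~\ref{serie}. Fix a basis $\mG$ of $\F_{q^m}$ over $\F_q$ and let $\mC=\mC_\mG(C)$ be the associated Delsarte code, writing $t:=\dim(\mC)$.

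First I would check that the hypotheses of Theorem~\ref{bij} transfer to $\mC$. By the first part of Theorem~\ref{serie} we have $t=mk$, so $m\mid t$ holds automatically. By the second part, $\mC$ has the same rank distribution as $C$; in particular its minimum distance equals $d$ and $A_d(\mC)=A_d(C)$. By the third part, the Delsarte dual $\mC^\perp$ has the same rank distribution as the Gabidulin dual $C^\perp$, so the minimum distance of $\mC^\perp$ equals $d^\perp$ and $A_{d^\perp}(\mC^\perp)=A_{d^\perp}(C^\perp)$. Hence $\mC$ is a Delsarte code whose minimum distance and dual minimum distance are $d$ and $d^\perp$, and it satisfies both $d+d^\perp=n$ (by hypothesis) and $m\mid t$.

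Then I would apply Theorem~\ref{bij} directly to $\mC$, obtaining $A_d(\mC)=A_{d^\perp}(\mC^\perp)$. Combining this with the two identifications $A_d(\mC)=A_d(C)$ and $A_{d^\perp}(\mC^\perp)=A_{d^\perp}(C^\perp)$ yields the claimed equality $A_d(C)=A_{d^\perp}(C^\perp)$.

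There is no genuine obstacle here: the argument is a direct specialization, and the only point requiring care is that the numerical invariants controlling Theorem~\ref{bij} --- namely the divisibility $m\mid t$ and the values of the minimum distances $d$ and $d^\perp$ --- are precisely the quantities preserved under the rank-distribution-preserving association $C\mapsto\mC_\mG(C)$. All the substantive content lives in Theorem~\ref{bij}; the corollary merely transports it through this correspondence.
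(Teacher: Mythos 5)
Your proposal is correct and takes essentially the same route as the paper: the paper presents this corollary as a direct consequence of Theorem~\ref{bij}, obtained by transporting its hypotheses and conclusion through the Gabidulin--Delsarte correspondence of Theorem~\ref{serie} (dimension $mk$ gives $m\mid t$, and the rank distributions of $C$, $C^\perp$ match those of $\mC$, $\mC^\perp$), exactly as you argue. No gaps.
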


\begin{remark}
The rank defect of 
$C$ is $\defect(C)=n+1-k-d$.
The following are easy consequences of
Proposition \ref{ssbound}.
\begin{enumerate}
 \item $\defect(C)=0$ if and only if  $C$ is MRD.
\item $\defect(C)=\defect(C^\perp)=1$ if and only if $d+d^\perp=n$. 
\end{enumerate}
Following the standard terminology, we say that $C$ is \textbf{AMRD} (Almost MRD)
if $\defect(C)=1$. We say that $C$ is \textbf{dually AMRD} if
$\defect(C)=\defect(C^\perp)=1$. Then Corollary \ref{def_gabid} 
is the analogue of \cite[Theorem 9]{Faldum-Willems}, and Corollary \ref{dually_gabid} 
is the analogue of \cite[Proposition 14]{Faldum-Willems}. 
Notice that the proof of Corollary \ref{dually_gabid} is substantially different from 
the proof of 
\cite[Proposition 14]{Faldum-Willems}.
\end{remark}

\end{document}